\newtheorem{corollary}{Corollary}
\newtheorem*{corollary*}{Corollary}
\newtheorem{theorem}{Theorem}
\newtheorem*{theorem*}{Theorem}
\newtheorem{lemma}{Lemma}
\newtheorem*{lemma*}{Lemma}
\newtheorem{prop}{Proposition}
\theoremstyle{definition}
\newtheorem*{examplesm}{Examples}
\newtheorem{defi}{Definition}
\newtheorem{rem}{Remark}
\title{On complexity of regular realizability
  problems\thanks{Supported by RFBR grants 
    11--01--00398 and 12--01--00864}}
\author{Mikhail N. Vyalyi\\ \texttt{vyalyi@gmail.com}}
\date{December 29, 2012}
\let\eps\varepsilon
\let\al\alpha
\let\leq\leqslant			    		 
\let\geq\geqslant			    		 
\let\vk\varkappa
\let\ph\varphi
\def\NN{\mathbb N}
\def\es{\varnothing}
\let\sm\setminus
\def\bydef{\mathrel{\stackrel{\scriptscriptstyle\text{def}}{=}}}
\newcommand*\RE{\ensuremath{\mathrm {\Sigma_1}}}
\newcommand*\DL{\ensuremath{\mathrm {LOG}}}
\newcommand*\NL{\ensuremath{\mathrm {NLOG}}}
\newcommand*\Pclass{\ensuremath{\mathrm {P}}}
\newcommand*\NP{\ensuremath{\mathrm {NP}}}
\newcommand*\BPP{\ensuremath{\mathrm {BPP}}}
\newcommand*\EXP{\ensuremath{\mathrm {EXP}}}
\newcommand*\FNL{\ensuremath{\mathrm {FNL}}}
\newcommand*\poly{\ensuremath{\mathrm {poly}}}
\newcommand*\PSPACE{\ensuremath{\mathrm {PSPACE}}}
\newcommand*\DSPACE{\ensuremath{\mathrm {DSPACE}}}
\newcommand*\NSPACE{\ensuremath{\mathrm {NSPACE}}}
\def\Im{\mathop{\mathrm{Im}}}
\def\ex{\ensuremath{\mathrm {sq}}}
\def\dex{\ensuremath{{\mathrm {sq}^\uparrow}}}
\def\D{D^{\uparrow}}
\def\C{\ensuremath{\mathcal C}}
\def\co{\text{\textup{co-}}}
\def\Iff{\mathbin{\;\Leftrightarrow\;}}
\def\Seq{\mathrm{Seq}}
\def\Iseq{\mathrm{ISeq}}
\def\reg{\mathrm{RR}}
\def\leGen#1#2{\mathbin{\leq^{\mathrm{#2}}_{\mathrm{#1}}}}
\def\lelog{\leGen{m}{log}}
\def\lenl{\leGen{m}{FNL}}
\def\ledtnl{\leGen{dtt}{FNL}}
\def\ledtl{\leGen{dtt}{log}}
\def\ledtp{\leGen{dtt}{p}}
\let\la\langle
\let\ra\rangle
\begin{document}
\maketitle
\def\abstractname{}
\def\proofname{Proof}
\def\refname{References}

\begin{abstract}
  A regular realizability (RR) problem is testing nonemptiness of
  intersection of some fixed language (filter) with a regular
  language. We show that RR problems are universal in the following
  sense. For any language $L$ there exists RR problem equivalent
  to~$L$ under disjunctive reductions on nondeterministic log space.
  
  We conclude from  this result an existence of complete problems under
  polynomial reductions for many  complexity classes including all
  classes of the polynomial hierarchy.
\end{abstract}

Motivation of this work is to find out a specific class of algorithmic
problems that represents in a unified way complexities of all known
complexity classes (there are hundreds of them now).

A typical algorithmic problem is recognition problem for a
language. But in the most interesting cases an input is structured:
it is a graph, function description etc. Our main goal is to choose a
structure of an input to satisfy two (somewhat contradictory)
requirements: a specific class of problems should be wide enough and
it should be useful. The latter requirement reflects a hope that
analysis of a specific problem might be easier than a general case.

Here we consider regular realizability problems in this context.
Let  $L$ be a language (we refer it further as \emph{a filter\/}). \emph{The
regular realizability} problem with this filter is a question about
realizability of regular properties on words in~$L$. More exactly, the
language  $\reg(L)$ consists of descriptions of regular languages  $R$
such that  $R\cap L\ne\es$. 

What are possible complexities of RR problems? In this paper we
obtain a~partial answer on this question. It appears
that RR problems are universal: for any other 
problem there exists an equivalent RR problem.

To make exact statements we need to fix a format for
descriptions of regular languages and an equivalence relation. 

We represent a regular language $R$ by a deterministic finite
automaton (DFA) $A$ recognizing the language $R$ (and denote this fact
as $R=L(A)$). DFAs are described in natural way by transition tables.
Details of the format used can be found in~\cite{Vya11}.  Important
features for this work are: (i) each binary word $w$ is a description
of some DFA $A(w)$ and (ii) testing membership for a regular language
$L(w)=L(A(w))$ can be done using deterministic log space. So formal
definition of the language $\reg(L)$ corresponding to RR problem with
a filter~$L$ is
$$
w\in\reg(L)\;\Leftrightarrow\; L\cap L(A(w))\ne \es.
$$

Equivalence relations considered here are induced by algorithmic
reductions. For any language $X$ an RR-representative of~$X$ (under
reductions of some type) is a
language $L$ such that  $\reg(L)$ is equivalent to  $X $ under
reductions of this type.

The most natural reductions in the context of regular realizability
are $m$-reductions using log space ($\lelog$-reductions). There exist
RR problems complete under log space reductions for complexity classes
such as $\DL$, $\NL$, $\Pclass$, $\NP$, $\PSPACE$, $\EXP$, $\RE$,
see~\cite{ALRSS09,Vya09}.

We can prove universality result mentioned above for stronger
reductions: disjunctive reductions using nondeterministic log space
(definition see below in Section~\ref{red-sect}).  In the proof we
reduce an arbitrary language $X$ to RR language by a
\emph{monoreduction\/}: a reduction of special kind that sends a
word~$w$ to a description of DFA accepting 1-element set $\{f(w)\}$,
where $f$ is an injective map.  Disjunctive nlog space reductions
appear in an attempt to invert a monoreduction (see
Section~\ref{univ-sect}).

For many cases disjunctive nlog space reductions are weaker than
polynomial reductions. In this way we extend a list of classes having
complete RR problems under polynomial reductions. It will be shown in
Section~\ref{normal-sect} that there exist RR complete problems for
each class of the polynomial hierarchy. Note that it is a rather
surprising even for the class $\co\NP$: RR problem is formulated by
existential quantifier (an existence of an accepting path possessing
specific properties) and there is no direct way to convert it into
universal quantifier.

We also present two other universality results.  

In Section~\ref{mono-sect} we show that RR problems are universal for
promise problems. In this case inversion of a monoreduction is much
simpler. But it requires \FNL{} reductions too. It seems that using
nondeterministic log space is unavoidable (see Remark~\ref{Nlogness}
in Section~\ref{mono-sect}).

In Section~\ref{GNA-sect} we extend universality to the classes of
generalized nondeterminism introduced in~\cite{Vya09}. 
Technically it suffices to prove universality of RR problems with
prefix closed filters. 
The proofs in
this section are similar to the proofs in Section~\ref{univ-sect} but
they involve more technical details.

Thus RR problems are universal. They represent a huge variety of
complexities. Are they useful? The proof suggests a negative
answer. Reductions used in the proof cut off almost all properties of regular
languages and put `the hard part' of a problem into instances
corresponding to finite languages. Of course, nlog reductions say
nothing about languages inside $\NL$. 




\section{What reductions are used in universality results}\label{red-sect}

Equivalence relations used here are defined using reductions. Languages
are equivalent (notation $A\sim B$) if they are reduced to each other under
reductions of some type.

We recall some basic definitions concerning algorithmic reductions. 
Let  $\C$ be a function class.  A language $A$ is reduced to a
language $B$ under $m$-reductions by functions from the class (notation
$\leGen{m}{\C}$) if there exists a function $f\in\C$ such that  $x\in
A$ iff $f(x)\in B$. If the class $\C$ contains the identity map and
is closed under compositions then $m$-reduction relation is a
preorder, i.e. transitive and reflexive relation. 

The most known reductions of this type are polynomial reductions (see,
e.g.~\cite{AB}).

In this paper we need weaker $m$-reductions. The corresponding
functions are computed by deterministic Turing machines using space
logarithmically bounded w.r.t. the input length. We denote these reductions by~$\lelog$.  More exactly, the
machine has the read only input tape, the work tape of size $O(\log
n)$, where $n$ is an input size and write only output tape.
It easy to check that the relation $\lelog$ is transitive. Note that
the size of $f(x)$ is polynomially bounded by the size of~$x$ (the
number of configurations is polynomially bounded). So, the log space
algorithm for a composition  $f(g(x))$ uses a subroutine to compute
$i$th bit of  $g(x)$ to simulate the input $g(x)$ for the algorithm
computing~$f$. More details can be found in textbooks on complexity
theory, say,~\cite{AB, Sipser}.

The second important type of reductions is Turing
reductions. A~language $A$ is Turing reducible to a language $B$ if
there exists an algorithm recognizing~$A$ that uses~an oracle~$B$.
There are several restricted forms of Turing reductions. In
\emph{truth-table reductions} a reducing algorithm generates a list
of oracle queries $q_1,\dots, q_s$ and a Boolean function~$\al$
depending on $s$ arguments. Then algorithm asks all queries and
outputs $\al(\chi_B(q_1),\dots,\chi_B(q_s))$.

Note that for log space Turing reductions and truth-table
reductions are equivalent~\cite{LL76}.

We use in the main result a weaker form of truth-table 
reductions---disjunctive reductions (notation $\leGen{dtt}{}$). 
In this case the function $\al$ is disjunction. 

Disjunctive reductions can be expressed via $m$-reductions in the
following way. Define a language   $\Seq(X)$ as a collection of words
in the form
$$
\# x_1\# x_2\#\dots x_n\#,
$$
where $x_i\in X$ for some~$i$,  $\#$ is a delimiter (an additional
symbol that do not belong to the alphabet of the language~$X$). 
The following statement is clear from definition. 

\begin{prop}\label{dtt<->Seq}
  $A\leGen{dtt}{} B$ iff
  $A\leGen{m}{}\Seq(B)$.
\end{prop}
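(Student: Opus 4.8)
The plan is to prove both directions of the equivalence by unwinding the definitions of disjunctive truth-table reductions and of the language $\Seq(B)$. The key observation is that both a $\leGen{dtt}{}$-reduction and an $m$-reduction to $\Seq(B)$ encode exactly the same combinatorial data: a list of queries $q_1,\dots,q_n$ together with the rule ``accept iff at least one query lands in $B$.'' The entire content is therefore to exhibit, in each direction, a log space procedure that converts one format into the other.

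\emph{From $A\leGen{dtt}{} B$ to $A\leGen{m}{}\Seq(B)$.} Suppose a log space algorithm witnesses $A\leGen{dtt}{} B$; on input $x$ it generates queries $q_1,\dots,q_n$ and takes the disjunction of the answers, so that $x\in A$ iff $q_i\in B$ for some~$i$. I would define the $m$-reduction $f$ by $f(x)=\#q_1\#q_2\#\cdots q_n\#$. By the very definition of $\Seq(B)$ this word lies in $\Seq(B)$ iff some $q_i\in B$, which holds iff $x\in A$. Computing $f$ in log space reduces to generating the $q_i$ and inserting the delimiters~$\#$; since the disjunctive reduction already produces its queries in log space, concatenating them with separators costs only a constant number of additional counters, so $f\in\DL$.

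\emph{From $A\leGen{m}{}\Seq(B)$ to $A\leGen{dtt}{} B$.} Conversely, let $f\in\DL$ satisfy $x\in A$ iff $f(x)\in\Seq(B)$. On input $x$ the disjunctive reduction computes $f(x)$, parses it at the delimiters into blocks $x_1,\dots,x_n$, issues these blocks as the oracle queries $q_1,\dots,q_n$, and outputs their disjunction. Then $f(x)\in\Seq(B)$ exactly when $f(x)$ is a well-formed $\#$-delimited word some of whose blocks lies in~$B$, i.e.\ when $\bigvee_i [q_i\in B]$ holds, which matches $x\in A$. One subtlety to address is malformed outputs: $f(x)$ should be checked to have the correct $\#$-delimited shape, and any $x_i$ containing a $\#$ or otherwise not a valid string over the alphabet of~$B$ must be treated as an automatic non-member; this bookkeeping is routine in log space since the parser only needs pointers into the read-only simulation of $f(x)$.

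The main obstacle, such as it is, is not conceptual but lies in these formatting conventions: pinning down that $\Seq(B)$ admits the empty list or repeated delimiters in the intended way, and confirming that the parsing and query-generation stay within log space rather than merely polynomial time. As the statement itself notes, the equivalence ``is clear from definition,'' so once the two translations above are written out the correctness is immediate, and the only real work is the log space resource accounting, which follows from the composability of log space functions recalled earlier in this section.
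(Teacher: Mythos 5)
Your proof is correct and is exactly the argument the paper has in mind: the paper states this proposition without proof as ``clear from definition,'' and your two translations---writing the query list of a $\leGen{dtt}{}$-reduction as a $\#$-delimited word, and conversely parsing $f(x)$ at the delimiters into oracle queries via the log space composition trick recalled earlier in the section---are precisely that definitional unwinding, with the malformed-output bookkeeping handled appropriately. No gaps; nothing further is needed.
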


Words from $\Seq(X)$ can be identified with finite
sequences of words from~$X$ in natural way. Hereinafter we assume this
correspondence. 

\begin{defi}
  A class of languages is \emph{normal\/} if it is closed under the map
  $X\mapsto \Seq(X)$ and 
  $\lelog$-reductions: if $X\in \C$ and $Y\lelog X$ then
  $\Seq(X)\in\C$ and $Y\in\C$.. 
\end{defi}

\begin{defi}
  A language $X$ is $\leGen{m}{}$-\emph{normal\/} if $X\sim \Seq(X)$
  under $\leGen{m}{}$-reductions. 

  If a function class is not indicated we assume
  log space 
  reductions. 
\end{defi}

\begin{prop}\label{SeqIsNormal}
  $\Seq(X)$ is normal for any $X$. 
\end{prop}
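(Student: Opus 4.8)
Here "$\Seq(X)$ is normal" should be read as the assertion that the language $\Seq(X)$ is $\lem$-normal with the default log space reductions, i.e. $\Seq(X)\sim\Seq(\Seq(X))$ under $\lelog$-reductions. So the plan is to produce log space $m$-reductions in both directions.

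For $\Seq(X)\lelog\Seq(\Seq(X))$ I would use the trivial embedding that regards an input word $w$ as a one-element outer sequence. Writing $\#'$ for the fresh delimiter introduced by the outer $\Seq$, the map $w\mapsto{\#'}w{\#'}$ is computable in log space and satisfies $w\in\Seq(X)\Iff{\#'}w{\#'}\in\Seq(\Seq(X))$, since the unique outer segment of ${\#'}w{\#'}$ is $w$ itself. (Equivalently, this direction is just reflexivity of $\ledtl$ pushed through Proposition~\ref{dtt<->Seq} with $A=B=\Seq(X)$.)

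The substantial direction is $\Seq(\Seq(X))\lelog\Seq(X)$, and rather than re-serialize by hand I would route it through Proposition~\ref{dtt<->Seq} (in its log space instance): it suffices to prove $\Seq(\Seq(X))\ledtl X$, and then the proposition applied with $A=\Seq(\Seq(X))$ and $B=X$ gives $\Seq(\Seq(X))\lelog\Seq(X)$ directly, with the re-serialization handled abstractly. The required disjunctive reduction parses an input $W$ as an outer sequence ${\#'}y_1{\#'}\dots{\#'}y_m{\#'}$ and, for every outer segment $y_j$ that is itself a well-formed inner sequence $\#z_{j,1}\#\dots\#z_{j,k_j}\#$, emits all of its inner segments $z_{j,i}$ as oracle queries. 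Then $W\in\Seq(\Seq(X))$ iff some $y_j\in\Seq(X)$ iff at least one emitted query lies in $X$, which is exactly the disjunction computed by the reduction; a malformed outer segment $y_j$ contributes no queries (consistently, such a $y_j$ is not in $\Seq(X)$), and a malformed $W$ yields the empty query list, i.e. the value false, consistent with $W\notin\Seq(\Seq(X))$. Parsing the two-level delimiter structure and checking the boundary delimiters needs only a constant number of pointers, so the reduction is log space.

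I expect the only genuine subtlety to be the well-formedness bookkeeping inside the query extraction, and it is precisely the point that makes a naive "flatten by deleting the $\#'$" approach awkward: simply deleting the outer delimiters glues adjacent inner delimiters into $\#\#$ and thereby inserts spurious empty segments, which changes membership when $\es\in X$, while blindly extracting every maximal delimiter-free block would wrongly turn the pieces of a malformed $y_j$ into queries. Going through Proposition~\ref{dtt<->Seq} removes the first difficulty (the canonical serialization ${\#}q_1{\#}\dots{\#}q_s{\#}$ keeps an empty query as an empty segment, so the answer is preserved) and reduces the second to the explicit, easily verified condition that queries are emitted only for outer segments that begin and end with $\#$.
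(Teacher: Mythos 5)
Your proposal is correct and takes essentially the same route as the paper: the paper also dismisses the easy direction as the trivial embedding $Y\lelog\Seq(Y)$ and gets the hard direction by proving $\Seq(\Seq(X))\leGen{dtt}{log}X$ (``a disjunction of disjunctions is a disjunction'', with the query list consisting of all elements of all elements of the input sequence) and then invoking Proposition~\ref{dtt<->Seq}. Your additional bookkeeping---the fresh outer delimiter, malformed segments, and the $\eps\in X$ pitfall of naive flattening---only makes explicit details the paper leaves implicit.
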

\begin{proof}
  It is obvious that $Y\lelog \Seq(Y)$ for any~$Y$. Thus we need to
 prove that $\Seq(\Seq(X))\lelog \Seq(X)$. Note that 
$$\Seq(\Seq(X))\leGen{dtt}{log} X.$$ 
Indeed, a disjunction of disjunctions is a disjunction. So the
reducing algorithm form a query list consisting of all elements of all
elements of an input sequence.

To complete
 a proof apply Proposition~\ref{dtt<->Seq}.
\end{proof}

\begin{lemma}\label{normal-languages}
  Let $\Seq (X)\lelog Y$, $Y\lelog \Seq(X)$. Then $Y$ is normal.
\end{lemma}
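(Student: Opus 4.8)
The plan is to isolate a single auxiliary fact and then assemble the conclusion from it together with Proposition~\ref{SeqIsNormal} and transitivity of $\lelog$. Recall that calling a language $Y$ \emph{normal} means $Y\sim\Seq(Y)$ under log space reductions, so I must establish both $Y\lelog\Seq(Y)$ and $\Seq(Y)\lelog Y$. The first inclusion is the ``obvious direction'' already noted in the proof of Proposition~\ref{SeqIsNormal} (one wraps an input into a one-element sequence), so the whole content lies in proving $\Seq(Y)\lelog Y$.

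The auxiliary fact I would prove is that $\Seq$ is \emph{monotone} under log space reductions:
$$
A\lelog B \quad\Rightarrow\quad \Seq(A)\lelog \Seq(B).
$$
Given a log space function $f$ with $x\in A\Iff f(x)\in B$, define $g$ sending a well-formed input $\# x_1\# x_2\#\dots x_n\#$ to $\# f(x_1)\# f(x_2)\#\dots f(x_n)\#$ (and sending malformed inputs to a fixed non-member of $\Seq(B)$). Then $\# x_1\#\dots\#x_n\#\in\Seq(A)$ iff some $x_i\in A$ iff some $f(x_i)\in B$ iff the output lies in $\Seq(B)$, since $\#$ belongs to neither alphabet. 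Computing $g$ in log space is the block-wise application of $f$: keep a pointer to the current block delimited by $\#$'s, simulate $f$ on that block by redirecting $f$'s input-tape accesses through the pointer, append $\#$, and advance. The work tape holds only $f$'s work tape (logarithmic in the block length, hence in the input length) together with a block counter, so $g\in\DL$.

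With monotonicity in hand the conclusion is immediate. Applying it to the two given reductions $\Seq(X)\lelog Y$ and $Y\lelog\Seq(X)$ yields $\Seq(\Seq(X))\lelog\Seq(Y)$ and $\Seq(Y)\lelog\Seq(\Seq(X))$, i.e. $\Seq(Y)\sim\Seq(\Seq(X))$. Proposition~\ref{SeqIsNormal} states that $\Seq(X)$ is normal, that is $\Seq(\Seq(X))\lelog\Seq(X)$. Chaining this with the hypothesis $\Seq(X)\lelog Y$ gives
$$
\Seq(Y)\lelog\Seq(\Seq(X))\lelog\Seq(X)\lelog Y,
$$
and transitivity of $\lelog$ delivers $\Seq(Y)\lelog Y$. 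Together with the trivial $Y\lelog\Seq(Y)$ this is exactly $Y\sim\Seq(Y)$, so $Y$ is normal.

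I expect the only genuine obstacle to be the log space implementation of the block-wise map $g$ in the monotonicity step; everything else is transitivity bookkeeping. The point to get right is that running $f$ on a sub-block requires presenting that block as $f$'s virtual input tape while the true input tape stays read-only and shared, which is the standard pointer-redirection trick and keeps the space logarithmic. One should also fix the convention on malformed inputs so that $g$ is a genuine $m$-reduction on all strings, not merely on well-formed ones.
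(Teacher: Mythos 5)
Your proof is correct, but it factors the argument differently from the paper. The paper's proof is a single direct construction: writing $f\colon(x_1,\dots,x_t)\mapsto y$ and $g\colon y\mapsto(x_1,\dots,x_t)$ for the two given reductions, it reduces $\Seq(Y)$ to $Y$ by the map $y=(y_1,\dots,y_s)\mapsto f(h(y))$, where $h$ applies $g$ to each block and merges the resulting sequences into one flat sequence over the alphabet of $X$; correctness is the chain $y\in\Seq(Y)\Iff h(y)\in\Seq(X)\Iff f(h(y))\in Y$. You instead isolate a monotonicity (functoriality) lemma, $A\lelog B\Rightarrow\Seq(A)\lelog\Seq(B)$, and then chain $\Seq(Y)\lelog\Seq(\Seq(X))\lelog\Seq(X)\lelog Y$, importing the flattening step $\Seq(\Seq(X))\lelog\Seq(X)$ from Proposition~\ref{SeqIsNormal} rather than building it into the reduction. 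If you unfold your composite it is essentially the paper's map (blockwise $g$, flatten, then $f$), so the two proofs compute the same reduction; the difference is organizational. Your version buys a reusable lemma, makes the single log space implementation issue (blockwise simulation via pointer redirection) appear exactly once in a clean statement, and handles malformed inputs explicitly, which the paper leaves implicit; the paper's version is shorter and avoids routing through the nested language $\Seq(\Seq(X))$, where one must silently use a fresh outer delimiter since $\#$ already belongs to the alphabet of $\Seq(X)$ (a convention your monotonicity step also relies on and could state explicitly, though the paper itself already commits to it in Proposition~\ref{SeqIsNormal}).
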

\begin{proof}
  We denote reducing maps by  $f\colon
  (x_1,\dots,x_t)\mapsto y$ (for the first reduction) and
  $g\colon y\mapsto(x_1,\dots,x_t) $ (for the second one).

  Now we reduce  $\Seq(Y)$ to $Y$. Compute
  sequences~$g(y_i)$ for 
  all elements
  $y_i$ of the sequence  $(y_1,\dots,y_s)$ and merge all of them into
  a sequence  
  $h(y)$ consisting of words from the language~$X$. 
  Apply to this sequence the map~$f$. It is a reduction in question. 
  Correctness stems from the fact that 
  $y\in\Seq(Y)$ iff $h(y)\in \Seq(X)$ iff
  $f(h(y))\in Y$.
\end{proof}

\begin{corollary}\label{normal-classes}
  If a normal class contains $m$-complete languages then all complete
  languages in this class are normal.
\end{corollary}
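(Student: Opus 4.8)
The plan is to unwind the definitions: being \emph{normal} as a language just means $B\sim\Seq(B)$, so I would fix a normal class $\C$, take an arbitrary language $B$ that is complete in $\C$ (under $\lelog$-reductions, following the stated convention), and show directly that $B\sim\Seq(B)$. The hypothesis that $\C$ contains $m$-complete languages plays only the modest role of guaranteeing that such $B$ exist, so that the universally quantified conclusion is not vacuous; each complete $B$ will then be handled on its own. One inclusion is immediate: as already noted in the proof of Proposition~\ref{SeqIsNormal}, $Y\lelog\Seq(Y)$ holds for every $Y$, and in particular $B\lelog\Seq(B)$.

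For the reverse reduction the single substantive observation is that $\Seq(B)$ stays inside the class. Since $\C$ is normal it is closed under the map $X\mapsto\Seq(X)$, and $B\in\C$, hence $\Seq(B)\in\C$. Now completeness of $B$ can be applied to this particular member of $\C$: every language of $\C$ reduces to $B$, so $\Seq(B)\lelog B$. Combining the two directions gives $B\sim\Seq(B)$, which is exactly the assertion that $B$ is normal. This is really the whole argument, and it is what makes closure of $\C$ under $\Seq$ indispensable: without it, completeness of $B$ would have nothing to say about $\Seq(B)$.

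If I wanted to route the proof through Lemma~\ref{normal-languages} instead, I would first fix one complete language $C_0\in\C$, use normality to get $\Seq(C_0)\in\C$ and hence $\Seq(C_0)\lelog C_0$, while $C_0\lelog\Seq(C_0)$ is automatic; then for an arbitrary complete $B$ I would use mutual completeness $B\sim C_0$ to obtain $\Seq(C_0)\lelog C_0\lelog B$ and $B\lelog C_0\lelog\Seq(C_0)$, so that Lemma~\ref{normal-languages} with $X=C_0$ and $Y=B$ yields that $B$ is normal. Either way I expect no genuine obstacle; the only points to watch are keeping the two uses of the word ``normal'' (for a class versus for a language) separate, and making sure the reductions are composed in the correct direction so that transitivity of $\lelog$ applies cleanly.
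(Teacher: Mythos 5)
Your proposal is correct, and your first, direct route is in fact a streamlining of the paper's argument: the paper states this as a corollary of Lemma~\ref{normal-languages}, i.e.\ it implicitly takes the route of your final paragraph (fix a complete $C_0$, note $\Seq(C_0)\in\C$ so $\Seq(C_0)\lelog C_0$ by completeness, then transport to an arbitrary complete $B$ via $B\sim C_0$ and invoke the lemma with $X=C_0$, $Y=B$). Your direct argument bypasses the lemma entirely: for each complete $B$ you get $B\lelog\Seq(B)$ for free and $\Seq(B)\lelog B$ from closure of $\C$ under $\Seq$ together with completeness of $B$, which is already the definition of $B$ being ($\lelog$-)normal. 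What the lemma buys in general is that it needs no class at all --- it upgrades mere two-way $\lelog$-equivalence of $Y$ with some $\Seq(X)$ to normality of $Y$, with the merge-and-flatten construction doing the work --- but under the corollary's hypotheses that extra generality is not needed, since every complete language is itself equivalent to its own $\Seq$. You also correctly identified the one point of care, namely that the existence hypothesis is only used for non-vacuity in your route (whereas the paper's route uses the fixed complete language as the $X$ of the lemma), and that ``normal'' for the class (Definition~1) and for the language (Definition~2) must be kept apart; your composition of reductions uses only transitivity of $\lelog$, which the paper has established.
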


Now we discuss a choice of a function class for disjunctive reductions. 

It is natural that  we prefer the weakest possible class. 
Log space seems to be unsufficient (see Remark~\ref{Nlogness} below). 

Next step is to use nondeterministic space. The corresponding class is
denoted by \FNL.  
Machines computing $\FNL$ functions use a logarithmically bounded work
tape and an unbounded oracle tape which is one way and write only.
An oracle puts its answer on the oracle tape and overwrites a
query. More details on the class~\FNL{} and its companions can be
found in~\cite{ABJ95}. In particular, the composition lemma holds for the
class~\FNL{} and the size of output is polynomially bounded by the input
size. 

We denote $\FNL$ disjunctive reductions by~$\ledtnl$.  The $\FNL$
$m$-reduction is denoted by $\lenl$. 

It is clear from definition that $\ledtnl$-reductions are stronger
then log space reductions and are weaker then disjunctive
reductions in polynomial time. So, we have the following facts.

\begin{prop}
  If $A\ledtl B$ then $A\ledtnl B$.
\end{prop}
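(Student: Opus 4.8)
The plan is to recognize that this statement is really a function-class inclusion in disguise, and then to peel away the disjunctive wrapping so that only that inclusion needs checking. A reduction of type $\ledtl$ is specified by a single log space transducer that maps an input $x$ to a query list $q_1,\dots,q_s$, together with the acceptance rule that $x$ is taken iff at least one $q_i$ lies in $B$; a reduction of type $\ledtnl$ is the same object, except that the query list is allowed to be produced by an $\FNL$ transducer. Since the disjunction part is identical in the two cases, the whole claim collapses to the assertion that every function computable in deterministic log space is also an $\FNL$ function. Equivalently, using Proposition~\ref{dtt<->Seq} (whose $\Seq$-encoding merely inserts delimiters and therefore preserves the resource bound), it suffices to show that $A\lelog\Seq(B)$ implies $A\lenl\Seq(B)$.

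I would then verify the function-class inclusion directly from the two machine models recalled in the excerpt. A deterministic log space transducer has a read-only input tape, an $O(\log n)$ work tape, and a one-way write-only output tape; an $\FNL$ machine has exactly these same resources, augmented only with the ability to make nondeterministic branchings on its way to writing the output. A deterministic machine is thus literally an $\FNL$ machine that never exercises this branching, so its output function is single-valued and lies in $\FNL$ by definition. Substituting this observation back into the reduction gives $A\lenl\Seq(B)$, and a final appeal to Proposition~\ref{dtt<->Seq} returns $A\ledtnl B$, as required.

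The main point is that there is, in fact, no genuine obstacle here: the only thing one must confirm is a bookkeeping compatibility, namely that the output conventions demanded of $\FNL$ transducers (single-valuedness, a one-way write-only output tape) are automatically met by a deterministic log space transducer. They are, by construction, which is precisely why the statement can be regarded as clear from the definitions and why it is packaged here among the elementary facts rather than proved in detail.
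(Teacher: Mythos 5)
Your proposal is correct and follows the same reasoning the paper treats as immediate: the paper states this proposition without proof, as ``clear from definition,'' precisely because a $\ledtl$-reduction differs from a $\ledtnl$-reduction only in the class computing the query list, and every deterministic log space transducer is trivially an \FNL{} machine (single-valued, with the same tape conventions). Your write-up simply makes that bookkeeping explicit, so there is nothing to correct.
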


\begin{prop}
  If $A\ledtnl B$ then $A\ledtp B$. Moreover, if  $B$ is normal then 
  $A\lenl B$. 
\end{prop}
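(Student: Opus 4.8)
The plan is to prove the two assertions separately: the first is a consequence of the containment of $\FNL$ in polynomial time, while the second rests on the $\Seq$-reformulation of disjunctive reductions together with the normality hypothesis.

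For the first assertion I would argue that every $\FNL$ function is computable in deterministic polynomial time. A nondeterministic log space transducer has only polynomially many configurations and produces output of polynomially bounded length, so for each output position one can decide which symbol is written by a reachability computation; since $\NL\subseteq\Pclass$, the whole output is recoverable in polynomial time. An $\ledtnl$-reduction of $A$ to $B$ uses such a transducer to generate the query list $q_1,\dots,q_s$ and accepts iff some $q_i\in B$. Reproducing the very same query list in polynomial time and keeping disjunction as the combining function yields $A\ledtp B$ directly.

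For the second assertion I would first establish the forward direction of the $\FNL$ analogue of Proposition~\ref{dtt<->Seq}, namely that $A\ledtnl B$ implies $A\lenl\Seq(B)$. Given the $\FNL$ transducer producing the queries $q_1,\dots,q_s$, I would have it instead emit the delimited word $\#q_1\#q_2\#\dots q_s\#$. Because the output tape of an $\FNL$ machine is one way and write only and the transducer already writes the queries in order, merely interleaving the separator $\#$ stays within $\FNL$. By construction $x\in A$ iff some $q_i\in B$ iff this word lies in $\Seq(B)$, so the reformatting map is an $\lenl$-reduction of $A$ to $\Seq(B)$.

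It remains to eliminate the $\Seq$. Here I would invoke the normality of $B$: by definition $B\sim\Seq(B)$ under log space reductions, so in particular $\Seq(B)\lelog B$. Composing the $\FNL$ reduction $A\lenl\Seq(B)$ with this log space reduction $\Seq(B)\lelog B$, and using that every log space function is an $\FNL$ function together with the composition lemma for $\FNL$, produces an $\FNL$ function $g\circ f$ with $x\in A$ iff $(g\circ f)(x)\in B$, i.e. $A\lenl B$. The one genuinely delicate point is the forward direction of the $\Seq$-reformulation: one must check that rewriting the query list into a delimited sequence respects the one way, write only output discipline of $\FNL$ transducers. This is routine but is precisely the step at which the machine model has to be handled with care.
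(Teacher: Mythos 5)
Your proof is correct and follows exactly the route the paper intends: the paper states this proposition without proof as clear from the definitions, the first part being the containment of $\FNL$ functions in polynomial-time functions, and the second being the $\FNL$ analogue of Proposition~\ref{dtt<->Seq} (reformatting the query list as a $\#$-delimited word, which respects the one-way output tape) composed with $\Seq(B)\lelog B$ from the normality of~$B$ via the composition lemma for $\FNL$. Both of your steps, including the care you take with the write-only output discipline, fill in precisely the details the paper leaves implicit.
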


\begin{rem}
  The universality is much easier to prove for exponential time
  reductions. But these reductions do not say anything about the most
  interesting realm of comple\-xi\-ties--- \PSPACE{} and below. 

  For polynomial time reductions we need the same constructions that
  are used in our proof below. The algorithmic facts become
  easier. 

  We use \FNL{} reductions to cover \Pclass{} and below.  Note that as
  shown in~\cite{AO96} basic counting log space classes are closed
  under $\ledtnl$-reductions.  
\end{rem}

\section{Examples of normal classes}\label{normal-sect}

We will apply Corollary~\ref{normal-classes} and
$\ledtnl$-universality results to prove that a complexity class
contains complete RR problems. Corollary~\ref{normal-classes} requires
a class into consideration to be normal.

The most of known complexity classes are normal. We give several examples.
In definitions of complexity classes and computational
models we follow Arora and
Barak book~\cite{AB}.

A straightforward algorithm for recognizing language $\Seq(X)$ is to check
$x_i\in X$ for all $x_i$ from the input 
$$\# x_1\#\dots \# x_m\#$$ and take disjunction of the results.

Let $X\in\DSPACE(f(n))$, where $f(n)=\Omega(\log n)$. Then the above
algorithm uses space $O(f(n))$. Thus the class $\DSPACE(f(n))$ is
normal (it closed under $\lelog$-reductions by obvious reasons). 

With small modifications the same argument is applied to
nondeterministic space classes. Nondeterministic algorithm guesses\footnote{Hereinafter a
  guess is a nondeterministic choice.} $i$
such that $x_i\in X$ and runs an algorithm recognizing~$X$ on an
instance~$x_i$. So the classes $\NSPACE(f(n))$, $f(n)=\Omega(\log n)$,
are also normal.  

For time complexity classes closeness under $\lelog$-reductions holds
for the class $\Pclass$ of polynomial time and more powerful
classes. Running time of the above algorithm recognizing $\Seq(X)$ is
upperbounded by
\begin{equation}\label{superadd}
\tilde T(n) =n+\max_{n=n_1+\dots+n_m} (T(n_1)+\dots+T(n_m)),
\end{equation}
where $T(n)$ is a running time of an algorithm recognizing~$X$.  It is
clear that $T(n)=\poly(n)$ implies $\tilde T(n) = \poly(n)$. So
$\Pclass$ is normal. For more powerful classes normality of a time
complexity class follows from closeness of time limitations under the
map $T(n)\mapsto \tilde T(n)$.  There is a~simple sufficient condition
for closeness: if a function $T(n)$ satisfies time limitations then
$nT(n)$ is also satisfies time limitations. Applying this observation
we get normality for classes of quasipolynomial time, exponential
time, simple exponential time and similar limitations.

It is easy to see that under the same conditions nondeterministic time
classes are also normal.  

The last example are classes of the polynomial hierarchy. 

\begin{prop}\label{HierNormal}
Each class of the polynomial hierarchy is normal.
\end{prop}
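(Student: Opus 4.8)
The plan is to treat the two families of hierarchy classes, $\Sigma^p_k$ and $\Pi^p_k$, uniformly; the level $\Sigma^p_0=\Pi^p_0=\Pclass$ is already covered by the time-complexity argument above, so I assume $k\geq1$. Recall that $X\in\Sigma^p_k$ (resp.\ $X\in\Pi^p_k$) iff there is a polynomial-time predicate $V$ and a polynomial length bound such that $z\in X\Iff Q_1u_1\,Q_2u_2\cdots Q_ku_k\;V(z,u_1,\dots,u_k)$, where the $u_j$ range over words of polynomial length and the alternating prefix $Q_1\cdots Q_k$ starts with $\exists$ in the $\Sigma$ case and with $\forall$ in the $\Pi$ case. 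Closure under $\lelog$ is immediate: a $\lelog$-reduction is in particular a polynomial-time $m$-reduction, and every level of the hierarchy is closed under polynomial-time $m$-reductions, so from $Y\lelog X$ and $X$ in a hierarchy class we get $Y$ in the same class. Hence the only real work is closure under $X\mapsto\Seq(X)$.

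By definition $w=\#x_1\#\cdots\#x_n\#\in\Seq(X)$ iff $w$ has this legal format (a polynomial-time check) and $\bigvee_{i=1}^n[x_i\in X]$ holds. First I would rewrite each disjunct with its own fresh copies of the bound variables, obtaining $\bigvee_{i}Q_1u_1^{(i)}\cdots Q_ku_k^{(i)}\,V(x_i,u_1^{(i)},\dots,u_k^{(i)})$. Then I would push the outer polynomial disjunction $\bigvee_i$ inward, block by block, past each quantifier, collecting the $n$ copies $u_j^{(1)},\dots,u_j^{(n)}$ of the $j$-th variable into a single polynomially long tuple $\vec u_j$. The outcome is $Q_1\vec u_1\cdots Q_k\vec u_k\;\bigvee_{i=1}^n V(x_i,u_1^{(i)},\dots,u_k^{(i)})$, which has exactly the same alternation pattern and a polynomial-time matrix (parsing $x_i$ from $w$ given $i$, then OR-ing $n\leq|w|$ polynomial-time tests). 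This places $\Seq(X)$ in the very same level as $X$, for both the $\Sigma$ and the $\Pi$ case at once.

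The one step that is not a formality is pushing $\bigvee_i$ past a universal block, since disjunction does not in general commute with $\forall$. What rescues it is that the disjuncts quantify over disjoint sets of variables: the identity $\bigvee_i\forall u^{(i)}P_i(u^{(i)})\equiv\forall\vec u\,\bigvee_i P_i(u^{(i)})$ holds, because if the left side fails then for each $i$ some $u^{(i)}_*$ falsifies $P_i$, and the single tuple $(u^{(1)}_*,\dots,u^{(n)}_*)$ then falsifies the right side, while the converse direction is trivial. The same independence argument handles the $\exists$-blocks. I would isolate this observation (equivalently, closure of $\Sigma^p_k$ and $\Pi^p_k$ under polynomially bounded conjunction and disjunction) as the key lemma; everything else is bookkeeping about polynomial length bounds and the polynomial-time format and parsing checks. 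Finally, the remaining classes $\Delta^p_k=\Pclass^{\Sigma^p_{k-1}}$ are even easier: a deterministic polynomial-time oracle machine simply loops over the $x_i$, queries each, and outputs the disjunction, so $\Delta^p_k$ is normal as well. Thus every class of the polynomial hierarchy is normal.
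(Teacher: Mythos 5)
Your proof is correct, and its crux coincides exactly with the paper's: the identity $\bigvee_i\forall u^{(i)}P_i(u^{(i)})\equiv\forall\vec u\,\bigvee_i P_i(u^{(i)})$, valid because the disjuncts quantify over disjoint fresh variables, is precisely the displayed equality $\exists i\in S\,\forall y\,W(x,i,y)=\forall_{i\in S}y^{i}\bigvee_{i\in S}W(x,i,y^{i})$ in the paper. The packaging, however, genuinely differs. The paper factors the argument through the auxiliary language $\Iseq(X)$ (the indexed version of $\Seq(X)$), writes $w\in\Seq(X)\Iff\exists i(\#iw\in\Iseq(X))$, and treats the two families asymmetrically: for $\Sigma^p_k$ the logarithmically bounded $\exists i$ is simply absorbed into the leading existential block with no variable renaming at all, and only for $\Pi^p_k$ is a \emph{single} quantifier interchange performed---after swapping $\exists i$ past the leading $\forall$, the surviving disjunction merges into the inner $\Sigma^p_{k-1}$ predicate, so no further swaps are needed. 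You instead rename the bound variables per disjunct and distribute $\bigvee_i$ through the entire length-$k$ prefix, handling $\Sigma^p_k$ and $\Pi^p_k$ in one uniform sweep. Your route is more symmetric and self-contained (no $\Iseq$ machinery, and both cases fall out of one lemma about closure under polynomially bounded disjunction); the paper's is more economical, isolating the observation that only the leading universal block is an obstruction and that a small existential quantifier is free at every level. Two minor extras on your side: you make the closure-under-$\lelog$ step explicit via closure under polynomial-time $m$-reductions, and you cover $\Delta^p_k$, which the paper's proof leaves implicit. Your bookkeeping claims (fresh copies keep the matrix polynomial-time since $n\leq|w|$, tuple lengths stay polynomially bounded, format checking is polynomial-time) are all sound, so there is no gap.
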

\begin{proof}
Closeness 
under $\lelog$-reductions is clear for each class of polynomial
hierarchy. So it remains to show 
closeness under the map $X\mapsto \Seq(X)$.

Let $\Iseq(X)$ be a language consisting of $\#\la i\ra\# x_1\dots \#
x_m\# $ such that $x_i\in X$. Here the delimiter symbol $\#$ does not
belong to the alphabet of the language~$X$ and $\la i\ra$ denotes
binary representation of integer~$i$.  It is clear that
$X\sim^{\mathrm{log}}_{\mathrm{m}} \Iseq(X)$. Indeed, using log space
one can extract $i$th list element.

By definition of $\Iseq(X)$ we have
\begin{equation}\label{logbounded}
(w\in\Seq(X)) \quad\Iff\quad \exists i (\# i w \in \Iseq(X)).
\end{equation}
Note that
if $X\in\Sigma^p_k$ then
$\{\# i w : \# i w\in \Iseq(X)\}$ is in $\Sigma^p_k$.
Thus   $\Seq(X)$ is in $ \Sigma^p_k$ due to~\eqref{logbounded}. 

Suppose now that $X\in \Pi^p_k$. In this case $\{\# i w : \# i w\in
\Iseq(X)\}$ is in $\Pi^p_k$ and we have for some $V\in \Sigma^p_{k-1}$ 
and polynomial $p(\cdot)$ 
$$
(\# i w\in \Iseq(X)) \quad\Iff\quad \forall y 
(|y|\leq p(|w|))\land (\# iw\# y\in V).
$$
So we need to interchange
quantifiers in~\eqref{logbounded}. 
Due to this fact one can apply an equality
$$
\exists i\in S \forall y W(x,i,y)  = \mathop{\forall}\limits_{i\in S} 
y^{i} \bigvee_{i\in
  S}  W(x,i,y^{i}) ,
$$
where $y^i$ are new variables indexed by~$i\in S$. Thus $\Seq(X)\in\Pi^p_k $. 
\end{proof}

\section{Monoreductions}\label{mono-sect}

Without loss of generality we consider languages in binary alphabet. 

Let $f$ be an injective map $\{0,1\}^*\to\{0,1\}^*$.

A~\emph{monoreduction} is a map  
\begin{equation}\label{monored}
x\mapsto A_{f(x)},
\end{equation}
where $A_{w}$ is a description of minimal DFA\footnote{We assume
that construction of minimal automaton is fixed. For minimization
algorithms see textbooks on formal languages, e.g.~\cite{Ko97}.}
recognizing the 1-element language~$\{w\}$.  Note that the number
of states in minimal DFA coincides with the number of Myhill -- Nerode
classes~\cite{Ko97}. It is easy to verify that the number of Myhill --
Nerode classes for the language $\{w\}$ is just $|w|+2$ (all prefixes
of the word~$w$ plus one) and the function $w\mapsto A_w$ is computed
on deterministic log space.

Informally speaking, the map~\eqref{monored} assigns `names' for all
binary words in the form of automaton description. Injectivity
condition implies that names of different words are different. 

If a map $f(x)$ is log space computed then the
corresponding monoreduction is also log space computed. It reduces  
a nonempty language $\es\subset X\subseteq\{0,1\}^*$ to some RR
problem
\begin{equation}\label{genred}
  X\lelog \reg (Y).
\end{equation}
It is easy to see from definitions that~\eqref{genred} holds iff
\begin{equation}\label{monocond}
Y\cap f(\bar X)=\es,\quad 
Y\supseteq f(X). 
\end{equation}
In other words, $Y$ separates images of $X$ and  $\bar
X$ and $Y$ contains the image of~$X$.

Complexity of a reduction in the opposite direction depends heavily on~$Y$.

Take for example $Y=f(X)$. It is not a good choice because complexity
of the language $\reg(f(X))$ varies in wide range w.r.t. complexity of
the language~$X$.  It can be illustrated in the simplest case
$f=\mathrm{id}$.

There exists a filter $L$ such that (a) the membership problem for $L$
is in the class of languages recognized by RAM in linear time; (b) 
$\reg(L)$ is complete for the class $\Sigma_1$
of recursively enumerable languages
under  $m$-reductions~\cite{Vya09}.

On the other hand for a  $X=\{0^n\}$ a language $\reg(X)$ is in
$\DL$. It was shown in~\cite{Vya11} that $\DL$ is $\lelog$-reducible
to any RR problem with infinite filter. 

Nevertheless, the reduction $X\lelog\reg(X)$ can be inverted if we
consider reductions among promise problems. A~\emph{promise problem}
is a problem of computing a partially defined predicate. In other
words there are two languages $L_1$ and $L_0$ such that $L_1\cap
L_0=\es$. The question is to test membership $w\in L_1$ provided
either $w\in L_1$ or $w\in L_0$.

Promise problems have more expressive power than languages
(which correspond to total  predicates). For many complexity
classes, say $\NP\cap\co\NP$ or $\BPP$, an existence of complete
languages in a class is an open problem. But there are simple and
natural examples of complete promise problems for these classes.

The question about complete RR  promise problems is also much easier
than the question about complete RR languages.

\begin{theorem}\label{promise}
  Any promise problem $(L_1,L_0)$ with $L_1\ne\es$ is equivalent to RR promise
  problem under nlog space reductions.
\end{theorem}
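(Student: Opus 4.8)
The plan is to prove equivalence in both directions between an arbitrary promise problem $(L_1,L_0)$ and a suitable RR promise problem. For the forward direction, I would fix an injective map $f$ computed in log space (for instance, a simple doubling-style encoding of binary words) and set up a monoreduction $x\mapsto A_{f(x)}$ as in~\eqref{monored}. The idea is to choose a filter $Y$ that separates the images of $L_1$ and $L_0$: concretely, take $Y=f(L_1)$, so that $Y\cap f(L_0)=\es$ and $Y\supseteq f(L_1)$. By the computation~\eqref{monocond}, the map $x\mapsto A_{f(x)}$ gives a log space $m$-reduction of the promise problem $(L_1,L_0)$ to the RR promise problem with filter $Y$, where the positive instances are those $w$ with $L(A(w))\cap Y\ne\es$ and the negative instances come from $L_0$. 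Since we only need correctness on instances satisfying the promise, the behavior of $\reg(Y)$ outside $f(L_1\cup L_0)$ is irrelevant, which is exactly what makes the promise setting easier.

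For the reverse direction I would invert the monoreduction. Here the essential observation is that a word $w$ describing a DFA $A(w)$ lies in the positive part of $\reg(Y)$ precisely when $L(A(w))$ contains some string in $f(L_1)$, i.e. some $f(x)$ with $x\in L_1$. To detect this, I would nondeterministically guess a word $u\in L(A(w))$ and check both that $u\in\Im f$ and that its $f$-preimage lies in $L_1$. The key point is that membership in $L(A(w))$ is testable in deterministic log space by feature~(ii) of the encoding, and that $f$ is chosen so that its range is decidable and its inverse is computable in log space on $\Im f$. Because $u$ may be longer than $w$, guessing it requires the nondeterministic machinery of \FNL: I would write $u$ onto the one-way write-only oracle tape symbol by symbol while simulating $A(w)$ and simultaneously decoding, so that the whole reduction stays within logarithmically bounded work space. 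This produces an $\lenl$-reduction from the RR promise problem back to $(L_1,L_0)$.

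The main obstacle, and the reason nondeterministic log space is genuinely needed, is precisely this inversion step: the witness string $u\in L(A(w))\cap Y$ accepted by the automaton can have length far exceeding $|w|$, so no deterministic log space procedure can store or enumerate it, whereas an \FNL{} machine can guess it bit by bit and hand its decoded preimage to the oracle. Once both reductions are in place, combining the forward $\lelog$-reduction (which is in particular an $\lenl$-reduction) with the reverse $\lenl$-reduction yields the claimed equivalence under nlog space reductions, completing the proof. I would close by noting that the whole argument only invokes the promise $w\in L_1\cup L_0$, so no separating behavior on ambiguous instances is ever required, in contrast to the harder language case treated later via disjunctive reductions.
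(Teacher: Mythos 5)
Your skeleton matches the paper's (forward monoreduction $w\mapsto A_{f(w)}$, reverse direction by guessing the accepted word in \FNL{} and querying the oracle), but there is a genuine gap: you never pin down the promise on the RR side, and that promise is the crux of the paper's proof. The paper takes as target $\reg(L_1:|R|=1)$, the RR problem with the promise $|L(A)|=1$, i.e.\ the pair $(\{A: L(A)\cap L_1\ne\es\land|L(A)|=1\},\ \{A: L(A)\cap L_1=\es\land|L(A)|=1\})$. You instead declare the positive side to be \emph{all} $w$ with $L(A(w))\cap f(L_1)\ne\es$, while the negatives ``come from $L_0$''. With the positive side unrestricted your reverse reduction fails: $L(A)$ may contain many words of $\Im f$, with $f$-preimages scattered over $L_1$, $L_0$ and neither, so the guess--decode--output procedure produces different queries on different accepting branches and is therefore not a single-valued \FNL{} function; it yields no $\lenl$-reduction. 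You also cannot repair this by ``checking that the preimage lies in $L_1$'' inside the transducer, since deciding membership in $L_1$ is precisely what the oracle is for. The singleton promise is exactly what makes the guessed word unique, hence the transducer single-valued. Moreover, defining the negative instances as the set of images $A_{f(x)}$, $x\in L_0$, is description-dependent rather than a condition on the language $L(A)$, which conflicts with the nature of regular realizability problems (cf.\ Remark~\ref{NotAuto}); the paper's negative side is language-invariant. Finally, your auxiliary injective encoding $f$ is harmless but unnecessary: with the promise $|L(A)|=1$ in place, the paper simply uses the filter $L_1$ itself and the map $w\mapsto A_w$.

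Your closing justification for why nondeterminism is needed is also incorrect. Under the singleton promise the unique word accepted by $A$ has length less than the number of states (Proposition~\ref{FinRegLen}), hence shorter than the DFA description, so it does not ``far exceed $|w|$''; and storage is not the issue in any case, since the output goes to a one-way write-only tape. The true obstruction, discussed in Remark~\ref{Nlogness}, is that a deterministic log space transducer would have to decide each next symbol of the word, which amounts to an \NL{}-type reachability (simplicity) test not known to lie in \DL.
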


\begin{proof}
  Define $\reg(L_1: |R|=1)$ as RR problem with a promise $|L(A)|=1$,
  i.e. a pair of languages
  $$
  (\{ A: L(A)\cap L_1\ne\es\land|L(A)|=1\},
  \{ A: L(A)\cap L_1=\es\land|L(A)|=1\}).
  $$
  Then a promise problem
  $(L_1,L_0)$ is $\lelog$-reduced to the  problem
  $\reg(L_1: |R|=1)$ by the map $w\mapsto A_w$. 
 
  In the other direction we can prove a weaker reduction
  \begin{equation}\label{fromPromiseRR}
    \reg(L_1: |R|=1)\lenl (L_1,L_0).
  \end{equation}
  To construct a reduction~\eqref{fromPromiseRR} we need a procedure
  that find a~word accepted by DFA $A$ provided $A$ recognizes a
  1-element language.  This procedure is easily implemented in the
  class $\FNL$: it nondeterministically guesses  the word symbol by symbol
  maintaining the current state of the automaton reading the word.
\end{proof}

\begin{rem}\label{1-point}
We call an automaton \emph{simple} if it recognizes a 1-element set.
Testing simplicity can be done using nlog space.

Testing algorithm again guesses a word accepted by the
automaton symbol by symbol. But now it also checks a possibility to choose
more than one symbol in such a way that after each choice the
automaton can be moved to an accepting state by reading some sequence
of symbols.  Here we use a well-known equality  $\co\NL=\NL$
\cite{Im88,Sipser}. 
\end{rem}

\begin{rem}\label{Nlogness}
  Is it necessary to use $\NL$-oracle in the
  reduction~\eqref{fromPromiseRR}? The question is open but the
  negative answer is more plausible. In the non-uniform settings the
  class of unambiguous nondeterministic log space coincides with the
  class of nondeterministic log space~\cite{RA02}. It is
  quite natural to suggest that  simplicity
  test do not belong to $\DL$ if $\DL\ne\NL$.
\end{rem}

\begin{rem}\label{NotAuto}
  The unique  word accepted by DFA can be easily recovered from special forms
  of DFA description (say, description of minimal DFA accepting
  1-element language). But it does not help to improve the
  reduction~\eqref{fromPromiseRR} because we are interested in regular
  realizability problems (the answer depends on a language and should
  be the same for all automata recognizing the language).
\end{rem}

\section{Universality of RR problems}\label{univ-sect}

In the case of language reductions we are able to prove a weaker
version of Theorem~\ref{promise} using disjunctive reductions.

\begin{theorem}\label{first}
  For any non-empty language $X$ there exists a filter~$L$ such that
  \begin{equation}
    X\lelog \reg(L)\ledtnl X.
  \end{equation}
\end{theorem}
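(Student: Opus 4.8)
The plan is to realise $X$ as $\reg(f(X))$ for a carefully chosen injective encoding $f$, to obtain the forward reduction as the monoreduction~\eqref{monored}, and to read that monoreduction backwards by a disjunctive search over the accepted code words.

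First I would fix an injective, log-space computable $f\colon\{0,1\}^*\to\{0,1\}^*$ with $|f(x)|=\poly(|x|)$ whose image $S=f(\{0,1\}^*)$ is a \emph{regular} set of ``well-formed code words'', from which $f^{-1}$ is log-space decodable, and which is \emph{sparse}: only polynomially many code words of each length, with the lengths spread out enough that short witnesses suffice (for instance, code words whose lengths avoid every infinite arithmetic progression, so that an infinite regular family of code words is visible as a cycle in a product automaton). Put $L=f(X)$. The forward reduction is then the monoreduction $x\mapsto A_{f(x)}$: by injectivity $f(X)\cap f(\bar X)=\es$ and $f(X)\supseteq f(X)$, so condition~\eqref{monocond} holds and~\eqref{genred} gives $X\lelog\reg(f(X))$.

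For the backward reduction I would, on input a DFA $A$ with $n$ states, form the product of $A$ with a fixed DFA for $S$; its accepting runs describe exactly the regular set $L(A)\cap S$ of code words accepted by $A$, on a state space of size $\poly(n)$. Using $\co\NL=\NL$ to make the reachability tests single valued, an \FNL{} transducer enumerates, in canonical order, the preimages $f^{-1}(w)$ of all accepted code words $w$ of length at most a fixed polynomial bound $B(n)$, and outputs the sequence of these preimages as queries to $X$. By Proposition~\ref{dtt<->Seq} this is a reduction $\reg(f(X))\ledtnl X$, whose correctness rests on the identity $L(A)\cap f(X)\ne\es\Iff\exists x\in X\colon f(x)\in L(A)$.

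The main obstacle is the case where $A$ accepts \emph{infinitely many} code words (already $L(A)=\{0,1\}^*$ forces $L(A)\cap S=S$), so the preimage list is infinite and cannot be enumerated. Here the sparsity of $S$ is essential: an infinite regular subfamily of $S$ must traverse a cycle in the product automaton, a condition testable in $\NL$, and the geometry of the code lengths is arranged so that the presence of such a cycle already forces $L(A)\cap f(X)\ne\es$ for our fixed $X$. Since $X\ne\es$ and the reduction may depend on $X$, I would hard-wire one element $x^{*}\in X$ and append it to the query list precisely when the cycle test succeeds; on the cycle-free branch the length-bounded enumeration is exhaustive. Designing $f$ so that this dichotomy holds---so that a single short, \FNL{}-computable query list of preimages faithfully captures a possibly infinite intersection---is the technical heart, and it is exactly here that nondeterministic log space (both for the code search and for the $\co\NL=\NL$ reachability) cannot be dispensed with, in line with Remark~\ref{Nlogness}.
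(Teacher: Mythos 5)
Your overall skeleton matches the paper's---monoreduction forward, an $\NL$ infiniteness/cycle test, $\co\NL=\NL$ for derandomizing the search, an $\FNL$ enumeration of preimages, a hard-wired element of $X$ on the trivial branch---but the backward reduction fails because of your choice of filter $L=f(X)$, which is exactly the choice the paper discusses and rejects. The paper instead takes the other extreme permitted by~\eqref{monocond}, namely $L=X_\ex\bydef\overline{\ex(\bar X)}$, the \emph{complement of the image of the complement}, and this difference is decisive on your ``infinite'' branch. Fix $X=\{x_0\}$ and let $A$ be a DFA for $\{0,1\}^*\sm\{f(x_0)\}$. Then $A$ accepts every code word except $f(x_0)$, so $L(A)\cap\Im(f)$ is infinite and your cycle test fires; you append $x^*=x_0\in X$ to the query list, the disjunction evaluates to true, yet $L(A)\cap f(X)=\es$, so the correct answer for $\reg(f(X))$ is negative. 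No ``geometry of code lengths'' can repair this: with filter $f(X)$, the answer on an instance accepting infinitely many code words depends on which of infinitely many preimages lie in $X$, which no finite query list captures; suppressing the hard-wired query instead makes your length-bounded enumeration incomplete. The complement filter turns precisely this case into an unconditional yes: $X_\ex\supseteq\bar D$ where $D=\Im(\ex)$ is regularly immune (Lemma~\ref{Dimmune}), so every infinite regular $L(A)$, and every finite one containing a word of $\bar D$, meets the filter; the hard-wired query $x_0$ is then \emph{correct}, and genuine queries are needed only when $L(A)$ is finite with $L(A)\subseteq D$, a condition testable in $\NL$ (Lemma~\ref{D-only}).

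The requirements you impose on $f$ are also internally inconsistent, in two ways. First, an infinite regular $S=\Im(f)$ must, by pumping (or Parikh's theorem), contain an infinite subset whose lengths form an arithmetic progression, so ``$S$ regular'' contradicts ``lengths avoid every infinite arithmetic progression''; the paper's $D$ achieves the latter and is therefore non-regular, which is why the product automaton with a DFA for $S$ is unavailable and the one-way log space membership tests have to be built by hand (Propositions~\ref{exNL} and~\ref{betaNL}). Second, ``polynomially many code words of each length'' contradicts injectivity with polynomial stretch: the $2^m$ words of length $m$ would have to map injectively into $\poly(m)$ code words. The paper needs no census sparsity at all; what it actually uses is a \emph{per-automaton} bound: since $D$-words are marked squares whose long common prefixes are impossible (Proposition~\ref{common-prefix}), any regular $L(A)\subseteq D$ satisfies $|L(A)|\leq|Q|$ (Lemma~\ref{Dimmune-few}) with all word lengths below $|Q|$ (Proposition~\ref{FinRegLen}), and it is this linear bound, not sparsity of $D$ itself, that makes the $\FNL$ enumeration of $\ex^{-1}(L(A))$ in Lemma~\ref{output} feasible.
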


In the proof of Theorem~\ref{first} we use the other extreme case for
conditions~\eqref{monocond}. Namely, we choose
$Y=X_f\bydef\overline{f(\bar X)}$. 

The idea behind the proof is to approximate
inversion of a monoreduction as close as possible. The difficulty of inversion
stems from the fact mentioned in Remark~\ref{NotAuto}: instances of RR
problem are all regular languages and  RR problem might be
hard for languages that are not 1-element languages constituting the image of
a monoreduction. To overcome this difficulty we
choose a filter satisfying conditions~\eqref{monocond} in such a way
that for most regular languages RR problem  with the chosen filter is
trivial. 

The first step toward implementation of this idea is to make RR
problem trivial for all infinite languages.

\begin{defi}
  An infinite language is \emph{regularly immune\/} if it does not
  contain any infinite regular language.
\end{defi}

Suppose that $f(\{0,1\}^*)$ is contained in some regularly immune
language.  Then any infinite regular language has a non-empty
intersection with $\overline{f(\{0,1\}^*)}$. So for any infinite
instance of $\reg(X_f)$ the answer is positive.

For finite instance  of $\reg(X_f)$ $m$-reducing algorithm should indicate
a word that possibly belongs to~$X$. It seems very hard for
arbitrary~$X$. 

Therefore the second step is to use disjunctive reductions. In this
case reducing algorithm should just produce list of all words accepted
by a specific automaton and this task is much easier in general case.

Note that examples of regularly immune languages are numerous and easy
to construct. Any such language can be used in reduction outlined
above. 
But the cardinality of a finite language can be exponentially
larger than the number of states in DFA recognizing the language. So
such a reduction requires an exponential time. 

Thus the next step is to choose a regularly immune set~$D$ possessing
a special property: the cardinality of any regular language in $D$ is
polynomially upperbounded by the number of states in DFA recognizing
the language. To guarantee a polynomial upper bound it is sufficient
to require that $D$ consists of squares only (see
Remark~\ref{poly-bound}). Below in Lemma~\ref{Dimmune-few} we give a
linear bound.

Last but not least, all actions mentioned above should be efficiently
implemented. We are going to construct a $\ledtnl$-reduction. So 
we need $\FNL$ implementations.

To realize the plan outlined above we start from specification
of~$D$. Let $\beta$ be a morphism $\beta(0)=01$, $\beta(1)=10$ and
$\ex(\cdot)$ be a map
\begin{equation}\label{ex-def}
\ex\colon x\mapsto \beta(x)1^20^{|x|^2+3} \beta(x)1^20^{|x|^2+3}.
\end{equation}
We choose $D=\Im(\ex)$, i.e. an image of all binary words under
the map~$\ex$.

To prove that $D$ is  regularly  immune one can use Parikh
theorem~\cite{Ko97,Parikh}. It says that the lengths of words from a
regular (or even CFL) language form a \emph{semilinear set\/}, i.e. a
finite union of arithmetic progressions.

\begin{lemma}\label{Dimmune}
  $D$ is regularly immune.
\end{lemma}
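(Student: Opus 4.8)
The plan is to show that $D=\Im(\ex)$ contains no infinite regular language by a length argument via Parikh's theorem, exploiting the very special length structure built into the map $\ex$. The key is to compute the set of lengths $\Len(D)=\{|\ex(x)| : x\in\{0,1\}^*\}$ explicitly and observe that this set is \emph{not} semilinear, so it cannot even contain an infinite semilinear subset. First I would compute $|\ex(x)|$ for $|x|=n$: since $\beta$ doubles length, each copy $\beta(x)1^20^{n^2+3}$ has length $2n+2+(n^2+3)=n^2+2n+5$, and there are two such copies, giving $|\ex(x)|=2(n^2+2n+5)=2n^2+4n+10$. Thus $\Len(D)=\{2n^2+4n+10 : n\geq 0\}$, a set whose consecutive gaps grow linearly in $n$ and hence without bound.

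Next I would set up the contradiction. Suppose $R\subseteq D$ is an infinite regular language. By Parikh's theorem the length set $\Len(R)$ is semilinear, and since $R$ is infinite, $\Len(R)$ is an infinite subset of $\NN$, so it contains at least one nondegenerate arithmetic progression $\{a+bk : k\geq 0\}$ with $b\geq 1$. But every length in $\Len(R)$ lies in $\Len(D)=\{2n^2+4n+10\}$, so infinitely many terms of a single arithmetic progression would have to be values of the quadratic $2n^2+4n+10$. This is impossible: the gaps between consecutive quadratic values are $2(n+1)^2+4(n+1)+10-(2n^2+4n+10)=4n+6\to\infty$, so no fixed common difference $b$ can be matched for large $k$. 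Hence $R$ cannot be infinite, a contradiction.

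I expect the main obstacle to be handling the subtlety that $R\subseteq D$ is an assumption about words, whereas Parikh's theorem controls only \emph{lengths}; a priori a regular language could have many words of the same length, so infinitude of $R$ need not immediately force $\Len(R)$ to be infinite. I would therefore first argue that words of $D$ are determined by their length: by the displayed gap computation the values $2n^2+4n+10$ are pairwise distinct for distinct $n$, so each length occurs for at most one $x$, and $\ex$ is injective; thus a word of given length in $D$ is unique. Consequently an infinite $R\subseteq D$ must contain words of infinitely many distinct lengths, forcing $\Len(R)$ to be infinite, and the argument above applies. (The extra padding blocks $1^2$ and the squaring exponent $n^2+3$ are what guarantee the quadratic, hence non-semilinear, length growth; this is precisely the feature that will be reused in Lemma~\ref{Dimmune-few} to obtain the linear cardinality bound.)
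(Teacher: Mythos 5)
Your main argument is exactly the paper's: compute $\Len(D)=\{2n^2+4n+10 : n\in\NN\}$ and observe via Parikh's theorem that an infinite regular sublanguage would force infinitely many terms of a fixed arithmetic progression to be values of this quadratic, which the growing gaps $4n+6$ rule out. That part is correct and matches the paper's (one-line) proof in substance.

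However, your final paragraph contains a false claim: you assert that ``each length occurs for at most one $x$, so a word of given length in $D$ is unique.'' This is wrong, because $|\ex(x)|=2n^2+4n+10$ depends only on $n=|x|$, not on $x$ itself; all $2^n$ words $x$ of length $n$ produce distinct words of $D$ of the \emph{same} length. (Injectivity of $\ex$ is true, but it does not follow from distinctness of lengths, and words of $D$ are certainly not determined by their lengths.) Fortunately, the step this claim was meant to support needs no such argument: over the binary alphabet there are at most $2^\ell$ words of any length $\ell$, so \emph{any} infinite language automatically contains words of infinitely many distinct lengths, and $\Len(R)$ is infinite whenever $R$ is. With that trivial repair your proof is complete and coincides with the paper's.
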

\begin{proof}
  Lengths of words from $D$ form the set
  $\{2n^2+4n+10: n\in\NN\}$. It is clear that intersection of this set
  with any arithmetic
  progression is finite.
\end{proof}

To give an upper bound on the cardinality of a regular language
contained in~$D$ we make a couple of observations.

By definition,  $D$ words are squares. 
Moreover, they are incomparable in the
following sense. 

\begin{prop}\label{common-prefix}
  Let $pq_1, pq_2\in D$. Then $p\in
  \beta(\{0,1\}^*)(\eps\cup\{0,1\})$. 
\end{prop}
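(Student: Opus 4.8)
The plan is to analyze the structure of words in $D=\Im(\ex)$ by exploiting the rigid syntactic shape imposed by~\eqref{ex-def}. Recall that each word of $D$ has the form
$$
\beta(x)1^20^{|x|^2+3}\beta(x)1^20^{|x|^2+3},
$$
and the key feature I would use is that $\beta$ is a morphism into blocks $\{01,10\}$, so a $\beta$-image is a concatenation of length-two blocks each containing exactly one $0$ and one $1$. In particular, inside a $\beta$-image one never sees the factors $00$ or $11$; these forbidden factors only appear at the $1^2$ marker and inside the $0$-runs. This gives me a way to locate, from the combinatorics of a prefix alone, where we are in the word.

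First I would take two words $pq_1,pq_2\in D$ sharing a common prefix $p$ and argue about how far $p$ can extend into the common structure before the two words are forced to diverge. The claim to establish is that $p$ cannot reach past the initial $\beta(x)$ portion (up to one trailing symbol): once $p$ enters the marker $1^2$ or the long $0$-run, the length parameter $|x|$ and hence the whole word would be pinned down, which would contradict having genuinely different continuations $q_1\ne q_2$ unless the words coincide. So the strategy is: suppose $|p|$ is large enough that $p$ extends beyond $\beta(x)(\eps\cup\{0,1\})$; derive that $p$ already determines $|x|$ (via the position of the $00$ factor that starts the $0$-run, or the $11$ marker), and then show it determines $x$ itself and thus the entire word, forcing $q_1=q_2$ and collapsing the two words to one.

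The core of the argument is a parity/block-alignment observation: within $\beta(\{0,1\}^*)$, the factorization into length-two blocks is canonical, so a prefix of even length determines a unique candidate for $x$'s prefix, and the symbol at an even position tells us whether we are still reading a valid $\beta$-block or have hit the $11$ marker. Concretely, I would track where the first occurrence of $11$ (equivalently $00$, since $\beta$-images avoid both) sits relative to $|p|$. If $p$ stays strictly inside $\beta(x)$ then $p\in\beta(\{0,1\}^*)$; allowing one extra symbol accounts for the possible odd-length overhang of a half-read block, which is exactly the $(\eps\cup\{0,1\})$ factor in the statement.

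The main obstacle I expect is the bookkeeping needed to rule out an accidental long common prefix arising when $x_1$ and $x_2$ agree on a long initial segment but differ later, combined with the two copies of $\beta(x)$ in the square structure: I must make sure the marker $1^2$ and the $0^{|x|^2+3}$ run cannot be mimicked by a $\beta$-image of a different word, so that the first point of divergence is genuinely confined to the first $\beta$-block region. The choice $1^2$ as separator (a factor forbidden inside $\beta$-images) and the superlinear gap $|x|^2+3$ are precisely what make this separation clean, and I would lean on these to close the argument.
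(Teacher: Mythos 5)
Your plan follows the same route as the paper's proof---recover the whole word $\ex(x)$ from its prefix ending at the $1^2$ marker, so that two distinct words of $D$ must part ways inside the initial $\beta$-image, up to one overhang symbol---but the combinatorial fact you rest it on is false. A $\beta$-image \emph{can} contain both $11$ and $00$ as factors: $\beta(01)=0110$ and $\beta(10)=1001$. Consequently ``the first occurrence of $11$'' does not locate the marker (for any $x$ beginning with $01$, the first $11$ in $\ex(x)$ starts at position~$1$, deep inside $\beta(x)$), the parenthetical ``equivalently $00$, since $\beta$-images avoid both'' is wrong, and your claim that these factors ``only appear at the $1^2$ marker and inside the $0$-runs'' fails. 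The invariant that is true, and that the paper's one-line proof uses, is alignment-sensitive: the symbols at positions $2i,2i+1$ of a $\beta$-image form a complete block $01$ or $10$, so $11$ never occurs starting at an \emph{even} position; hence the first even-aligned $11$ in $\ex(x)$ sits exactly at position $2|x|$, and a common prefix of $pq_1=\ex(x_1)$, $pq_2=\ex(x_2)$ long enough to cover an even-aligned $11$ forces $|x_1|=|x_2|$, then $\beta(x_1)=\beta(x_2)$, then $x_1=x_2$, collapsing the two words. You gesture at exactly this with your ``canonical block factorization'' remark, but your stated detection rule is the unqualified first $11$/$00$, and inspecting a single ``symbol at an even position'' cannot separate a block $10$ from the marker $11$ either---you need the aligned pair.

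The alignment issue is not deferrable bookkeeping; it is the proof, and it also controls the tightness of the statement. The only factors of $\beta$-images that are forbidden at \emph{every} alignment are $000$ and $111$ (any three consecutive positions cover a complete block, which contains both letters); arguing from those alone would still permit a common prefix such as $\beta(x)11$ or $\beta(x)1100$, and $\beta(x)11\notin\beta(\{0,1\}^*)(\eps\cup\{0,1\})$ since no block equals $11$---so you would prove something strictly weaker than the proposition. (The paper does exploit the $0^3$-forbidden consequence elsewhere, e.g.\ in Lemma~\ref{Dimmune-few}, but the proposition itself needs the parity-aware test; likewise, after the repair the $00$ signal still works, but it marks the start of the zero run two positions after the marker, not the marker itself.) With the even-position qualification inserted, your outline contracts precisely to the paper's argument: the prefix $\beta(x)11$ uniquely determines $x$ and hence all of $\ex(x)$.
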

\begin{proof}
  Note that $w=\ex(x)$ can be recovered from the prefix $\beta(x)11$:
  the first occurrence of $11$ starting at even position\footnote{We
  enumerate positions in a word starting with~0.} signals that the
  prefix $\beta(x)$ is completed and $x$ is uniquely determined by
  this prefix. 
\end{proof}

Proposition~\ref{common-prefix} will play an important role
below. In particular, we will use the fact that common prefix of words from
$D$ does not contain $0^3$ (easily follows from
Proposition~\ref{common-prefix}). 
Just now we indicate an another simple corollary.

\begin{corollary}\label{antichain}
  No word from $D$ is a prefix of an another word from~$D$.  
\end{corollary}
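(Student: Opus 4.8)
The plan is to derive Corollary~\ref{antichain} directly from Proposition~\ref{common-prefix}, which is the strong structural fact available immediately above it. Suppose for contradiction that some word $w_1\in D$ is a proper prefix of another word $w_2\in D$. Then writing $w_1 = p$ and $w_2 = p q$ with $q \ne \eps$, we are in exactly the situation covered by Proposition~\ref{common-prefix}: both $p q_1 = w_2$ and $p q_2 = w_1$ (with $q_1 = q$ and $q_2 = \eps$) lie in $D$ and share the common prefix $p = w_1$. Hence Proposition~\ref{common-prefix} forces $w_1 = p \in \beta(\{0,1\}^*)(\eps\cup\{0,1\})$, i.e.\ the entire word $w_1$ is consumed by an initial $\beta$-coded block plus at most one extra symbol.

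The contradiction then comes from comparing this with the explicit shape of a $D$-word. By~\eqref{ex-def}, every element of $D$ has the form $\beta(x)1^20^{|x|^2+3}\beta(x)1^20^{|x|^2+3}$, so in particular it contains the block $0^3$ (indeed $0^{|x|^2+3}$ with $|x|^2+3\geq 3$) after the $\beta$-prefix. But a word in $\beta(\{0,1\}^*)(\eps\cup\{0,1\})$ cannot contain $0^3$: the morphism images $\beta(0)=01$ and $\beta(1)=10$ never produce three consecutive zeros, and appending at most one trailing symbol cannot create a run of three. This is precisely the observation flagged right after Proposition~\ref{common-prefix}, namely that a common prefix of $D$-words does not contain $0^3$. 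Since $w_1$ is itself a full $D$-word, it does contain $0^3$, giving the contradiction.

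I would therefore structure the proof as: assume a proper prefix relation, invoke Proposition~\ref{common-prefix} to conclude the shorter word lies in $\beta(\{0,1\}^*)(\eps\cup\{0,1\})$, and observe that no such word contains $0^3$ whereas every $D$-word does. The only point requiring mild care is the bookkeeping around ``proper'' prefix versus the $(\eps\cup\{0,1\})$ slack in Proposition~\ref{common-prefix}: one must check that the at-most-one extra symbol allowed there still cannot accommodate a full genuine $D$-word, and this is exactly what the $0^3$ argument settles since the zero-run sits well past that boundary. I expect no serious obstacle here—this corollary is a short consequence whose entire content is packaged in the preceding proposition and its stated remark about $0^3$.
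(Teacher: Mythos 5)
Your proof is correct and is exactly the derivation the paper intends: the corollary is stated without proof as a ``simple corollary'' of Proposition~\ref{common-prefix}, immediately after the paper flags the key observation you use---that words in $\beta(\{0,1\}^*)(\eps\cup\{0,1\})$ contain no $0^3$ while every $D$-word contains the block $0^{|x|^2+3}$---and your application of the proposition with $q_2=\eps$ is legitimate since the two words $w_1\ne w_2$ are distinct (the reading under which the proposition actually holds). The same $0^3$ contradiction pattern is what the paper itself deploys in the proof of Lemma~\ref{Dimmune-few}.
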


\begin{rem}
  For an arbitrary language of squares Corollary~\ref{antichain} does
  not hold: $0101$ is a prefix of $010111010111$.
\end{rem}

Now we are ready to prove  a linear bound on the cardinality of a
regular language $L(A)$ that sits in $D$.

\begin{lemma}\label{Dimmune-few}
  Let  $A$ be DFA such that   $L(A)\subset D$. Then $|L(A)|\leq |Q|$, where
  $Q$ is the state set of~$A$.
\end{lemma}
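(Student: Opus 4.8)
The plan is to injectively assign to each word of $L(A)$ a state of~$A$, which immediately yields $|L(A)|\le|Q|$. The assignment exploits that every word of $D$ is a square: writing $u_x=\beta(x)1^20^{|x|^2+3}$ we have $\ex(x)=u_xu_x$, so each accepted word has the form $u_xu_x$ with $\ex(x)\in L(A)$. For such $x$ let $r_x=\delta(q_0,u_x)$ be the state reached after reading the first half, where $q_0$ is the initial state and $\delta$ the transition function. I would then show that $x\mapsto r_x$ is injective on $\{x:\ex(x)\in L(A)\}$; since $\ex$ is injective this gives $|L(A)|=|\{x:\ex(x)\in L(A)\}|\le|Q|$.

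To prove injectivity I use a crossover argument. Suppose $r_x=r_{x'}$ for two words $\ex(x),\ex(x')\in L(A)$. Then $\delta(q_0,u_xu_{x'})=\delta(r_x,u_{x'})=\delta(r_{x'},u_{x'})=\delta(q_0,u_{x'}u_{x'})$, which is an accepting state because $\ex(x')=u_{x'}u_{x'}\in L(A)$. Hence $u_xu_{x'}\in L(A)\subseteq D$. It remains to deduce $u_x=u_{x'}$ from $u_xu_{x'}\in D$, and this is the step where the special form of $\ex$ is needed.

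The hard part is this uniqueness step. I would argue as follows. A word of $D$ equals $u_zu_z$ for a unique $z$, and, as observed after Proposition~\ref{common-prefix}, the factor $\beta(z)1^2$ contains no $0^3$, so the first occurrence of $0^3$ in $u_zu_z$ begins exactly at position $2|z|+2$. The same reasoning applied to $u_xu_{x'}=\beta(x)1^20^{|x|^2+3}\beta(x')1^20^{|x'|^2+3}$ locates its first $0^3$ at position $2|x|+2$. Comparing the two positions in the equal words $u_xu_{x'}=u_zu_z$ forces $|x|=|z|$, and since $|u_x|=(|x|+1)^2+4$ depends only on $|x|$, this gives $|u_x|=|u_z|$. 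Then $u_x$ and $u_z$ are the length-$|u_z|$ prefixes of one and the same word $u_xu_{x'}=u_zu_z$, so $u_x=u_z$; matching total lengths then gives $u_{x'}=u_z$ as well. Hence $u_x=u_{x'}$, so $\beta(x)1^20^{|x|^2+3}=\beta(x')1^20^{|x'|^2+3}$, and injectivity of $\beta$ yields $x=x'$. This contradiction establishes that $x\mapsto r_x$ is injective and completes the proof.
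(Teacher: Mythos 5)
Your proof is correct and follows essentially the same route as the paper's: both exploit that accepted words are squares, swap halves of two words that agree ``at the midpoint'' to obtain the cross term $u_xu_{x'}\in L(A)\subset D$, and rule this out using the absence of $0^3$ before the $11$ marker. The only cosmetic differences are that you count states of $A$ directly via the injective midpoint-state map $x\mapsto r_x$, where the paper counts Myhill--Nerode classes of the minimal DFA, and that you locate the first occurrence of $0^3$ by an explicit position computation instead of appealing to Proposition~\ref{common-prefix}.
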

\begin{proof}
  It is sufficient to consider a minimal DFA $B$ recognizing
  $L(A)$. The states of~$B$ are in one-to-one correspondence with
  Myhill -- Nerode classes. Consider two
  words 
  $u_1u_1\ne u_2u_2$ in $L(A)$. We prove that  $u_1u_2\notin
  L(A)$. It implies that  $u_1$ and $u_2$ are not equivalent and the
  number of Myhill -- Nerode classes is not less than $|L(A)|$.

  Suppose that $u_1u_2=vv\in L(A)$. Either $u_1$ is a prefix of~$v$ or
  $v$ is a prefix of~$u_1$. In both cases we come to a contradiction
  with Proposition~\ref{common-prefix}: both $u_1$ and $v$ contain 
  $0^3$ as a subword.
\end{proof}

Note also that lengths of words of a finite regular language $L(A)$ are also
linearly upperbounded by the number of states of DFA $A$.

\begin{prop}\label{FinRegLen}
  Let  $A$ be DFA with the state set~$Q$. If  $|L(A)|<\infty$ then for
  any  $w\in L(A)$
  we have $|w|< |Q|$.
\end{prop}

This fact can be easily extracted from proofs of pumping lemma for
regular languages. An equivalent statement: 
$L(A)$ is infinite iff there exists an accepting walk on the graph of
the automaton~$A$ containing a cycle.

Now we construct algorithms used in the proof of Theorem~\ref{first}.

Note that the binary representation of the length of word~$x$
has a size $O(\log|x|)$. So arithmetic operations with numbers of this
magnitude can be performed using log space. This fact is widely used
below. 

\begin{prop}\label{exNL}
The membership problem for the
language~$D$ is in \DL. 
The map $\ex$ and the inverse map $\ex^{-1}$ are log space computed.
\end{prop}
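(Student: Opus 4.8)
The plan is to treat the three assertions separately, with membership in $D$ as the substantial one, and to keep in mind throughout that, as noted just before the statement, arithmetic on $O(\log n)$-bit numbers (hence on quantities of magnitude $\poly(|x|)$) can be carried out in log space.

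For the forward map $\ex$ I would exhibit a log space transducer. It first scans $x$ once, emitting $01$ for each $0$ and $10$ for each $1$, thereby outputting $\beta(x)$; then it emits $11$. In parallel it maintains a counter holding $n=|x|$, computes $m=n^2+3$, and emits $m$ zeros via a countdown counter. It then repeats this whole block once more. The work tape holds only a constant number of $O(\log n)$-bit counters, so $\ex$ is log space computed.

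For the inverse $\ex^{-1}$, which is defined on $w\in D$, the task is to recover $x$ from the prefix $\beta(x)$. By Proposition~\ref{common-prefix} the prefix $\beta(x)$ is precisely the maximal initial segment whose two-symbol blocks at even positions are all $01$ or $10$, and the first even-position block equal to $11$ marks its end. Reading the input in consecutive pairs, the transducer emits $0$ for each $01$-block, $1$ for each $10$-block, and halts at the first $11$-block. This is a streaming transducer using constant work memory.

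The membership test for $D$ is where the only real obstacle lies: the candidate preimage $x$ has length $n$ and cannot be stored in log space, so the template $\beta(x)\,1^2\,0^m\,\beta(x)\,1^2\,0^m$ must be checked by positional comparison rather than by recomputing $\ex(x)$. First I would recover $n$ (not $x$) exactly as in the inverse computation, scanning even-position blocks and rejecting if some block is $00$ or if no $11$-block occurs; let $n$ be the number of blocks before the first $11$, and set $m=n^2+3$. All block-boundary positions ($2n$, the two $0^m$ segments, the second $\beta(x)$, the second $11$) are then computable by log space arithmetic, so I can verify directly on the two-way read-only input that the two $0^m$ blocks consist of zeros, that both $11$ markers are present, and that $|w|=2n^2+4n+10$ as in Lemma~\ref{Dimmune}. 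The delicate step is matching the two copies of $\beta(x)$: since $x$ is too long to remember, I do not re-derive the first copy but instead test $w[j]=w[2n+2+m+j]$ for $j=0,\dots,2n-1$ using an index counter, transferring correctness of the first copy to the second by direct self-comparison of the input. Every check uses only $O(\log n)$-bit counters, so the whole procedure runs in \DL; I expect all steps other than this self-comparison to be routine counter manipulation.
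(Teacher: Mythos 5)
Your proposal is correct and takes essentially the same approach as the paper: log space streaming transducers with $O(\log n)$-bit counters for $\ex$ and $\ex^{-1}$, and a positional, counter-based verification of the template for membership in~$D$. The paper merely orders the membership test differently (first checking squareness via the same kind of self-comparison $w[j]=w[|w|/2+j]$, then parsing one half as $\beta(x)1^20^{|x|^2+3}$), which is equivalent to your parse-then-self-compare arrangement.
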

\begin{proof}
    The algorithm that computes  $\ex$ repeats twice the following procedure:
  compute $\beta(w)$, where $w$ is the input word, add to the result
  $1^2$, then add 0s (the required number of zeroes is computed using
  log space). 

  The membership test for~$D$ can be done in two stages. At first the
  algorithm checks that an input word is a square $uu$ (easily
  performed using log space).  If so, on the second stage the
  algorithm finds an occurrence of $11$ in the word~$u$ starting at
  even position and splits $u$ in a prefix and suffix removing this
  subword $11$. After that it verifies that the prefix is a code
  $\beta(x)$ of some $x$ and the suffix is $0^{|x|^2+3}$.

  Computing of the inverse map is performed in similar way.
\end{proof}

In nondeterministic algorithms working with inputs containing a
description of DFA $A$ we use procedures `guessing a
word' and `guessing a square'.  

Guessing a word is a repeating guessing of symbols from the alphabet
of~$A$. These symbols form a word and in parallel algorithms will test
some properties of this word. Tests should use  log
space and read a word in one-way. Note that DFA description is not
shorter than the alphabet size as well as the cardinality of the state
set. So, log space is sufficient to store a constant number of
symbols and states. 

A simple example of a test is a check that a guessing word is accepted
by~$A$. This test stores a current state of the automaton and applies
the transition function.  It has been used above in the proof of
Theorem~\ref{promise}. 

Guessing a word is also used in algorithmic version of
Proposition~\ref{FinRegLen}. 

\begin{prop}\label{infty}
  Infiniteness of a regular language is in \NL{} provided a language
  is represented by DFA recognizing it.
\end{prop}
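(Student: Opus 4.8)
The plan is to show that testing infiniteness of $L(A)$ reduces to detecting a reachable cycle on an accepting walk, as stated in the equivalent formulation following Proposition~\ref{FinRegLen}. Concretely, $L(A)$ is infinite if and only if there exist states $p,q,r$ of the automaton~$A$ such that $q$ is reachable from the initial state, $r$ is an accepting state reachable from $q$, and $q$ lies on a nontrivial cycle (that is, $q$ is reachable from itself by reading a nonempty word). I would verify this characterization first: a pumping argument gives that any accepted word of length at least $|Q|$ forces a repeated state on its accepting walk, yielding such a cycle, and conversely any such cycle can be pumped to produce infinitely many accepted words.

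First I would describe the nondeterministic log space procedure directly in the style already used in the excerpt. The machine guesses the index of a candidate state~$q$ (its binary name has size $O(\log|Q|)$, so it fits on the work tape) and then performs three reachability checks, each carried out by the standard \NL{} procedure of guessing a path step by step while maintaining only the current state: reachability of~$q$ from the initial state; reachability of some accepting state from~$q$; and reachability of~$q$ from~$q$ by a path of length at least one. The last check is the only delicate point, since the trivial empty path must be excluded; I would handle it by first guessing one outgoing transition from~$q$ to a successor~$q'$ and then testing reachability of~$q$ from~$q'$ (allowing the empty path there). Each of these subtests reads the transition table of~$A$ in the one-way, log-space fashion permitted by the guessing-a-word framework, and the machine accepts exactly when all three succeed for the guessed~$q$.

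The space accounting is routine: at any moment the algorithm stores a constant number of state names together with a step counter bounded by~$|Q|$, all of size $O(\log|Q|)=O(\log n)$ where $n$ is the length of the DFA description. Correctness follows from the cycle characterization above together with the observation that existential guessing over the choice of~$q$ and over the three witnessing paths is exactly the nondeterminism available to an \NL{} machine. The main obstacle is conceptual rather than computational: one must be careful that the self-reachability test detects a genuine cycle and not the length-zero walk, which is precisely why the reduction is phrased through a first mandatory transition out of~$q$. Everything else is a direct instance of graph reachability in \NL, so no appeal to $\co\NL=\NL$ is needed here (in contrast to Remark~\ref{1-point}).
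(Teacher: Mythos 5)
Your proof is correct and follows essentially the same route as the paper: the paper's algorithm guesses a state $q$ and then guesses a word $w\in L(A)$ whose run visits $q$ at least twice, which is exactly your cycle characterization with the three reachability segments (start to $q$, nonempty $q$ to $q$, $q$ to an accepting state) guessed in one pass rather than sequentially. Your explicit handling of the empty-path pitfall and the observation that no appeal to $\co\NL=\NL$ is needed are sound refinements of the same idea.
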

\begin{proof}
  The algorithm guesses a state $q$ of an input DFA $A$. After that it
  guesses a word  $w\in L(A)$ such that the automaton visits $q$ at
  least twice while reading~$w$. The latter condition is easily
  verified on log space even if an input should be read in one-way.
\end{proof}

The next example will be used below.

\begin{prop}\label{betaNL}
  Halves of words from $D$, i.e. words in the form
  $\beta(x)1^20^{|x|^2+3}$, can be recognized using log
  space. Moreover, a recognizing algorithm can read input in one way. 
\end{prop}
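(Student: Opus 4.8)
The plan is to describe a single left-to-right scan of the input $w$ that reads it in aligned pairs of symbols and maintains only a counter for $|x|$ together with, later, a counter for the trailing zeros. Any candidate word has the form $\beta(x)1^20^{|x|^2+3}$, so the prefix $\beta(x)$ occupies the aligned positions $[0,1],[2,3],\dots$ and each such pair is $\beta$ of a single bit, hence equals $01$ or $10$. In particular, inside $\beta(x)$ an aligned pair is never $00$ or $11$. The scanning algorithm therefore reads two symbols at a time: as long as the pair is $01$ or $10$ it increments a counter $n$ (the running value of $|x|$); the first aligned pair that is not $01$ or $10$ must be the separating marker $1^2$, while a $00$, a single leftover symbol, or premature end of input causes rejection.

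Once the pair $11$ has been consumed, the value $n=|x|$ is fully determined, and it remains to verify that the rest of the input equals exactly $0^{n^2+3}$. Here I would use the observation recorded just before Proposition~\ref{exNL}, that a number of magnitude $O(|w|)$ has binary length $O(\log|w|)$ and can thus be manipulated in log space: the algorithm computes $n^2+3$, initializes a down-counter to this value, and continues the one-way scan, rejecting if any remaining symbol is not $0$ and decrementing the counter for each $0$ read. It accepts precisely when the counter reaches $0$ at end of input, which simultaneously checks that every trailing symbol is a zero and that their number is exactly $n^2+3$.

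Correctness rests on Proposition~\ref{common-prefix}: since the aligned pairs of $\beta(x)$ are only $01$ or $10$, the first even-position pair outside $\{01,10\}$ is the genuine boundary between $\beta(x)$ and $1^2$, so the factorization recovered by the scan is the unique correct one; the degenerate case $x=\eps$, giving input $11000$ with $n=0$, is handled automatically. The procedure stores only the counters $n$ and $n^2+3$ and a constant number of symbols and states, i.e. $O(\log|w|)$ space, and it inspects the input strictly from left to right, so it is one-way as claimed.

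The step requiring the most care is the joint log-space and one-way verification of the zero block: the target count $n^2+3$ cannot be computed until the marker has been passed, so the argument depends on $n$ being frozen at that moment and on the arithmetic over $O(\log|w|)$-bit numbers being log-space computable. The pair-scanning and marker detection are routine once Proposition~\ref{common-prefix} is invoked.
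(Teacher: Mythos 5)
Your proof is correct and follows essentially the same route as the paper's: a one-way pair-scanning stage that counts $01/10$ pairs until the marker $11$ (rejecting on $00$, a stray symbol, or premature end), followed by computing $n^2+3$ in log space and verifying the zero block by counting. The only cosmetic differences are your down-counter in place of the paper's end-of-input comparison and your appeal to Proposition~\ref{common-prefix}, which is not strictly needed since the deterministic parse already forces the unique factorization.
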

\begin{proof}
  The algorithm runs in two stages. 

  At the first stage it reads pairs of symbols and counts
  the number of pairs. If the pair
  $00$ is read then the algorithm stops with the negative answer.
  If the pair
  $11$ is read then the algorithm starts the second stage.
  In other cases the counter of pairs is increased by~$1$.

  At the second stage the algorithm computes and stores $x^2+3$, where
  $x$ is the counter value in the end of the first stage. 
  Then the algorithm reads symbols and
  counts the number of symbols. If the symbol $1$ is read then the
  algorithm stops with the negative answer.

  After reading the last symbol the algorithm compares the number of
  0s read and the stored  value
  $x^2+3$. The answer is positive if these values are
  equal. Otherwise, the answer is negative.

  Correctness of the algorithm is clear as well as  logarithmic
  bound on space used.
\end{proof}

Guessing a square works similarly. 
It has two input parameters $q_1,q_2\in Q(A)$. The procedure guesses a
word $w$ such that $\delta_A(s,w)=q_1$,
$\delta_A(q_1,w)=q_2$ (hence $\delta_A(s,ww)=q_2$).  For this purpose
it is sufficient to store two  states---the current states of reading
processes starting  at the state~$s$ and at the state~$q_1$
respectively. In parallel tests for the guessed word might be
launched. They also should use  log
space and read a word in one-way.

Guessing a square is used in the following algorithm that checks
triviality of RR problem for a finite language.  This check will be
applied for RR problems with a  filter containing~$\bar D$. Thus
for instances $A$ such that  $L(A)\sm D\ne\es$ the answer of RR
problem is positive.

\begin{lemma}\label{D-only}
  Testing conjunction $|L(A)|<\infty$ and $L(A)\subset D$ is in  $\NL$.
\end{lemma}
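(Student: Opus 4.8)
The plan is to pass to the complement and use the identity $\co\NL=\NL$ (already invoked in Remark~\ref{1-point}). So it suffices to place the language
$$
\{A : |L(A)|=\infty \ \text{or}\ L(A)\not\subset D\}
$$
in $\NL$. The key observation making this tractable is that finiteness bounds witness length: by Proposition~\ref{FinRegLen}, if $L(A)$ is finite then every $w\in L(A)$ satisfies $|w|<|Q|$. Consequently, if $L(A)$ is finite and $L(A)\not\subset D$, there is a word $w\in L(A)\sm D$ with $|w|<|Q|$; and if $L(A)$ is infinite the first disjunct already fires. Hence the complement is \emph{equivalent} to ``$|L(A)|=\infty$ or there exists $w\in L(A)\sm D$ with $|w|<|Q|$'', and in the second disjunct I may freely assume the witness is short (enforced by a counter up to $|Q|$, which fits in log space).

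Since $\NL$ is closed under union, I would handle the two disjuncts separately. The first, infiniteness, is in $\NL$ by Proposition~\ref{infty}. For the second, I would guess $w$ symbol by symbol using the ``guessing a word'' primitive, simulating $A$ in parallel to certify $w\in L(A)$ while a length counter enforces $|w|<|Q|$, and simultaneously run a one-way log-space test certifying $w\notin D$. Because $w$ is generated on the fly and never stored, every test must read $w$ left-to-right in log space; but this is legitimate, as all tests are subcomputations of one global nondeterministic log-space machine whose only two-way input is the description of $A$.

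The core is therefore the on-the-fly certification of $w\notin D$. Recall $w\in D$ iff $|w|$ is even, $w=uu$ is a square, and the half $u=\beta(x)1^20^{|x|^2+3}$ is valid. I would split the negation into three nondeterministically guessed failure types, each checkable one-way: (i) \emph{odd length}, verified by counting symbols modulo $2$; (ii) \emph{unequal halves}, verified by first guessing a half-length $\ell<|Q|$ and a position $p<\ell$, then checking $|w|=2\ell$ and that the symbols at positions $p$ and $p+\ell$ differ (store the position-$p$ symbol, advance the counter by $\ell$, compare), which certifies $w$ is not a square; (iii) \emph{square with invalid half}, verified by guessing $\ell$, checking $|w|=2\ell$, and running the deterministic one-way recognizer of Proposition~\ref{betaNL} on the first $\ell$ symbols, accepting when it \emph{rejects}. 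These three cases exhaust the negation of the defining conjunction for $D$, so some branch accepts precisely when $w\notin D$. Combining with the infiniteness test, the complement is in $\NL$, and by $\co\NL=\NL$ the lemma follows.

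The hard part will be case (ii): deciding that the two halves of a candidate square disagree cannot be done deterministically in one left-to-right pass, since that would require buffering an entire half (length up to $|Q|$, far exceeding log space). This is exactly why nondeterminism---guessing the half-length and a mismatch position---and the closure $\co\NL=\NL$ are indispensable here, whereas the structural checks of the half (Proposition~\ref{betaNL}) and $D$-membership (Proposition~\ref{exNL}) are comparatively routine.
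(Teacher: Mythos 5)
Your proof is correct and follows essentially the same route as the paper's: infiniteness via Proposition~\ref{infty}, the witness-length bound from Proposition~\ref{FinRegLen}, and a three-way case split on why a guessed word fails to lie in $D$ (odd length, a guessed mismatch position between the two halves, or an invalid half detected by the one-way recognizer of Proposition~\ref{betaNL}), all wrapped in $\co\NL=\NL$. The only cosmetic differences are that you complement the whole conjunction once rather than handling finiteness and $L(A)\subset D$ as two separate $\NL$/$\co\NL$ tests, and your case (iii) does not insist the guessed word be a square---which is harmless, since any even-length word whose first half fails the test of Proposition~\ref{betaNL} lies outside $D$ regardless---whereas the paper invokes its ``guessing a square'' primitive there.
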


\begin{proof}
  Testing finiteness of a regular language is in \NL{}
  (Proposition~\ref{infty}). So the algorithm in question performs
  this check at the first stage. If $L(A)$ is infinite the algorithm
  outputs the answer and finishes. So in the sequel we assume that
  $L(A)$ is finite.

  Note that $L(A)\setminus
  D\ne\es$ iff  
  \begin{itemize}
  \item either there exists  $w\in L(A)$ of odd length;
  \item either there exists  $w\in L(A)$ of even length that is not a
    square;
  \item or all words in $L(A)$ are squares and there exists $ww\in
  L(A)\setminus D$. 
  \end{itemize}

  The algorithm guesses among these three cases and tests a chosen
  property. 

  Note that by Proposition~\ref{FinRegLen} lengths of words from
  $L(A)$ are not greater than input size (i.e. DFA description). So
  counting up to a word length can be done using log space.

  In the first case the algorithm guesses a word 
  $w\in L(A)$ of odd length (and check the parity of the length). 

  In the second case it  guesses position~$i$,  the length
  $2k$ and word $w$ of the length $2k$ from $L(A)$ such that  $i$th and
  $(k+i)$th bits of $w$ are different (and uses counters to find these
  bits).

  In the last case it guesses a square $ww\in L(A)\setminus D$ and
  checks that $ww\notin D$ using Proposition~\ref{betaNL}. Parameters
  are  states $q$, $t$ such that $\delta_A(s,w) = q$ and
  $\delta_A(q,w)=t$. The algorithm guesses them before guessing a
  square.
 
  Correctness of the algorithm follows from the above observations.
\end{proof}

Our next goal is an algorithm that extracts words accepted by an
automaton recognizing a finite language and outputs their images under
inverse map $\ex^{-1}$. We assume here that the regular language in
question sits in~$D$.

Let $A$ be  DFA such
that $L(A)\subset D$, $Q$ be the state set, $s$ be the initial
state and $Q_a$ be the set of accepting states of~$A$. We denote by $\delta_A$
the transition function $\delta\colon Q\times A^*\to Q$ of~$A$
extended to the set of words in the input alphabet of~$A$ in natural way.

Suppose that all words in $L(A)$ are squares. Define a map
\begin{equation}\label{middle}
  \mu\colon L(A)\to Q
\end{equation}
by the rule: if $w=uu\in L(A)$ then $\mu(w) =
\delta_A(s,u)$. 

Map~\eqref{middle} is not injective. But its preimage is bounded.

\begin{prop}\label{preimage-bound}\label{unique}
  If $L(A)$ is a square language  then for any accepting state~$t$,
  integer~$k$ and a state~$q$ there exists at most one word
  $uu\in\mu^{-1}(q)$ such that $|u|=k
  $ and $\delta_A(s,uu) = t$.
\end{prop}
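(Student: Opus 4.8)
The plan is to use a crossover (half-swapping) argument in the spirit of Lemma~\ref{Dimmune-few}. Suppose two words $u_1u_1$ and $u_2u_2$ both lie in $\mu^{-1}(q)$, both satisfy $|u_1|=|u_2|=k$, and both satisfy $\delta_A(s,u_1u_1)=\delta_A(s,u_2u_2)=t$; I will show $u_1=u_2$, which gives the claimed uniqueness. By the definition of $\mu$ in~\eqref{middle}, lying in $\mu^{-1}(q)$ means precisely $\delta_A(s,u_1)=\delta_A(s,u_2)=q$. So both first halves drive the automaton from $s$ to the common intermediate state $q$, while both full words end at the common accepting state $t$.

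The key step is to splice the two words together and observe that the crossed word $u_1u_2$ is still accepted. Using $\delta_A(s,u_1)=q$ and the concatenation property of the extended transition function, $\delta_A(s,u_1u_2)=\delta_A(q,u_2)$. But $\delta_A(s,u_2)=q$ yields $\delta_A(s,u_2u_2)=\delta_A(q,u_2)=t$, so in fact $\delta_A(s,u_1u_2)=t$. Since $t$ is accepting, $u_1u_2\in L(A)\subseteq D$.

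Now the square structure of $D$ finishes the proof. As $u_1u_2\in D$, it is a square $vv$; since $|u_1|=|u_2|=k$, its first half is $u_1$ and its second half is $u_2$, whence $u_1=v=u_2$. Therefore $u_1u_1=u_2u_2$, so at most one word with the prescribed triple $(t,k,q)$ can occur.

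I do not expect a real obstacle: the whole argument hinges on the single observation that sharing both the midpoint state $q$ and the endpoint state $t$ lets one graft the first half of one candidate onto the second half of the other without leaving $L(A)$, after which membership in the square language $D$ forces the two halves to agree. The only point requiring a little care is the transition bookkeeping establishing $\delta_A(s,u_1u_2)=t$, which is exactly the short computation above; everything else is immediate.
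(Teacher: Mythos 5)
Your proposal is correct and is essentially the paper's own proof: the paper argues by contradiction that two distinct squares $uu\ne vv$ with the same triple $(q,k,t)$ would force $uv\in L(A)$, which is not a square, while you run the identical crossover computation $\delta_A(s,u_1u_2)=\delta_A(q,u_2)=t$ in the contrapositive direction. One small remark: your appeal to $L(A)\subseteq D$ is not part of the proposition's hypothesis and is unnecessary --- the stated assumption that $L(A)$ is a square language already forces $u_1u_2$ to be a square, which (given $|u_1|=|u_2|=k$) is all your final step uses.
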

\begin{proof}
By contradiction. If two different squares $uu$,
$vv$ satisfy these conditions then $uv\in L(A)$ but is not a square.
\end{proof}

\begin{rem}\label{poly-bound}
  Proposition~\ref{preimage-bound} implies a polynomial bound of the
  cardinality of $L(A)$ consisting of squares. 

  From propositions~\ref{FinRegLen} and~\ref{preimage-bound} we get a
   bound
   $$
   |L(A)|=\sum_{q\in Q}|\mu^{-1}(q)|\leq \frac{|Q|}2\cdot |Q|\cdot |Q|
   =|Q|^3/2.
   $$
\end{rem}

\begin{lemma}\label{output}
  There exists an \FNL{} algorithm that outputs the list of words of
  $\ex^{-1}(L(A))$ provided $L(A)\subset D$. 
\end{lemma}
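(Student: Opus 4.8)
The plan is to enumerate the words of $L(A)$ by a polynomially sized family of ``coordinates'' and to output the $\ex^{-1}$-image of each word as soon as it is found. Concretely, by Proposition~\ref{FinRegLen} every word $uu\in L(A)$ satisfies $2|u|<|Q|$, and by Proposition~\ref{preimage-bound} such a word is uniquely determined by the triple $(q,k,t)$ with $q=\delta_A(s,u)$, $k=|u|$ and $t=\delta_A(s,uu)\in Q_a$. This map $uu\mapsto(q,k,t)$ is injective and each triple corresponds to at most one word, so $L(A)$ is in bijection with a subset of the polynomially large index set $Q\times\{0,\dots,|Q|\}\times Q_a$. First I would fix a linear order on triples (say lexicographic) and iterate over all of them with log-space counters of magnitude at most $|Q|$; the output list will be produced in this canonical order.

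For a single triple $(q,k,t)$ I would run the ``guess a square'' procedure with parameters $q_1=q$, $q_2=t$, restricted to words of length $k$: the machine guesses $u$ symbol by symbol while maintaining the two reading states $\delta_A(s,\cdot)$ and $\delta_A(q,\cdot)$ and a length counter, accepting the guess only if at the end $\delta_A(s,u)=q$ and $\delta_A(q,u)=t$. Whenever the guess is accepted we have $uu\in L(A)\subseteq D$, hence $uu=\ex(x)$ for a unique~$x$; so in parallel I would decode $x$ from the guessed prefix $\beta(x)1^2$ (reading pairs $01\mapsto 0$, $10\mapsto 1$ until the delimiter $11$) and write $x$, followed by a separator $\#$, onto the one-way output tape. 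Since Proposition~\ref{preimage-bound} guarantees that $u$ is unique, every accepting computation for this triple writes exactly the same word~$x$, which is what single-valuedness of an \FNL{} function requires; branches that guess a wrong $u$ simply reject, and their output is discarded.

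The one genuinely delicate point is what to do with a triple for which no word exists: there the square-guessing subroutine has no accepting branch, and left as is this would destroy every global accepting path. To handle this I would let the machine first guess, for each triple, whether the word exists. The predicate ``$(q,k,t)$ has a corresponding word in $L(A)$'' is an \NL{} property, so by the equality $\co\NL=\NL$ already used in Remark~\ref{1-point} its negation is also certifiable in nondeterministic log space. A branch that guesses ``exists'' must then exhibit a witnessing $u$ (and outputs the corresponding $x\#$ as above), while a branch that guesses ``does not exist'' must pass an inductive-counting certificate of non-existence (and outputs nothing); invalid certificates lead to rejection. Because exactly one of the two answers is correct for each triple and the witness is unique, all global accepting paths make identical decisions and emit the identical list, so the transducer computes a single-valued function.

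Finally I would check correctness and resources. Every element of $\ex^{-1}(L(A))$ is produced exactly once by injectivity of the triple encoding; each $x$ has length $O(\sqrt{|Q|})$ and there are at most $|Q|$ of them by Lemma~\ref{Dimmune-few}, so the whole output has polynomial size. The work tape only ever stores a constant number of states together with counters bounded by $|Q|$, i.e.\ $O(\log|Q|)$ bits, and the output tape is written strictly left to right. The expected main obstacle is precisely the coordination of the per-triple existence test with single-valued one-way output, which is why the appeal to $\co\NL=\NL$, rather than plain deterministic log space, is essential here.
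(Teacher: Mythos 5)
Your proof is correct, and its combinatorial core is the same as the paper's: the parametrization of $L(A)$ by triples $(q,k,t)$ justified by Propositions~\ref{FinRegLen} and~\ref{preimage-bound}, the guess-a-square subroutine, and the appeal to $\NL=\co\NL$ for triples without witnesses. Where you genuinely diverge is in how the \FNL{} computation is organized. The paper works in the oracle model it takes from~\cite{ABJ95}: a \emph{deterministic} log-space machine consults an \NL{} oracle, first to test for each triple whether a witness exists, and then---precisely because a deterministic machine cannot guess $u$---it reconstructs $x$ symbol by symbol, asking at each step which of the two candidate next symbols admits a completion to the unique witness (with $\NL=\co\NL$ entering only through the oracle's negative answers). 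You instead build a single-valued nondeterministic transducer: you guess the entire witness $u$ at once, decode $x$ inline from the prefix $\beta(x)11$, derive single-valuedness from the per-triple uniqueness of $u$, and handle witness-free triples by guessing the answer and certifying non-existence with an inductive-counting (Immerman--Szelepcs\'enyi) certificate. These are the two standard equivalent characterizations of \FNL, and your existence-guess-plus-certificate device is exactly the simulation showing the transducer model subsumes the oracle model, while the paper's per-symbol oracle decoding is what the deterministic model forces. Your route buys a simpler decoding step (no bit-by-bit oracle queries) at the cost of having to argue single-valuedness and non-existence certification explicitly, which you do correctly; the only point worth making explicit is that your machine model matches the paper's stated definition of \FNL{} up to this standard equivalence.
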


\begin{proof}
  It follows from Proposition~\ref{preimage-bound} that each word
  $uu\in L(A)$ is uniquely determined by states $q\in Q$,
  $t\in Q_a$ and integer~$k$ such that $\mu(uu)=q$, $|u|=k$,
  $\delta_A(s,uu)=t$. Proposition~\ref{FinRegLen} guarantees that
   $k<|Q|$. 

  The algorithm in question tries all possible values $q$, $t$,
  $k$. For each triple it uses an \NL{} oracle to check an existence
  of a word~$uu\in D$ determined by the triple.

  An \NL{} algorithm for this check  guesses a square
  with parameters $q,t$. In parallel it verifies that the length of a
  guessed word~$u$ is~$k$ and $uu\in D$. For the former test it uses a
  counter of guessed symbols. The latter is possible due to
  Proposition~\ref{betaNL}.

  If  $uu\in D$ for a triple $q$, $t$, $k$ does exist then the
  algorithm decodes $x$ such that
  $u=\beta(x)1^20^{|x|^2+3}$ and outputs~$x$.

  It can be done by  guessing a word~$u$
  combined with two simulations
  of  reading
  the word~$u$ by the automaton~$A$. One simulation starts from the initial
  state~$s$ and the another starts from~$q$. These two simulations
  need to store two current states $q_1$ и $q_2$ respectively. 

  Guessing a symbol is replaced in this procedure by two trials.  For
  each possible variant $\al$ of the next symbol (there are two of
  them) the modified algorithm simulates reading the pair $\al\bar\al$
  and for new values $q_1'$ and $q_2'$ asks an \NL{} oracle about
  existence of a word $w$ such that $\delta_A(q_1',w)=q$,
  $\delta_A(q_2',w)=t$. The corresponding \NL{} algorithm is a
  modification of guessing a square. 
  Due to  $\NL=\co\NL$ an negative
  answer can be also verified on nondeterministic log space.

  The word $uu$ is unique for the triple $q$, $t$, $k$. So the oracle
  answers positively for exactly one value of $\al$.  This value is
  the next symbol of $x$ and the algorithm outputs it.
\end{proof}

Tying up loose ends we get the proof of the main result.

\begin{proof}[Proof of Theorem~\ref{first}.]
  Let us prove that 
  \begin{equation}\label{Xex}
        X\lelog \reg(X_\ex)\ledtnl X.
  \end{equation}

  The first reduction does exist due to~\eqref{genred} and
  Proposition~\ref{exNL}. 

  Now we construct the second reduction. Let $A$ be an instance of
  $\reg(X_\ex)$. The reducing algorithm checks infiniteness of $L(A)$
  and $L(A)\setminus D\ne\es$ using Lemma~\ref{D-only}.  If $R$ is
  infinite or it contains a word from $\bar D$ then the reducing
  algorithm forms a query list of  length~1 containing a fixed
  element~$x_0\in X$ and output the list.
 
  Otherwise the algorithm outputs the list of all words from
  $\ex^{-1}(L(A))$ applying the algorithm from Lemma~\ref{output}.

  To prove correctness of the above reduction note that $D$ is
  regularly immune. So any infinite regular language has a common word
  with~$X_\ex$. If $L(A)$ is finite and contains a word from  $\bar
  D$, then
  it has nonempty intersection with $X_\ex$. Finally, if a finite
  language  $L(A)$ is contained in  $D$ then 
  $R\cap X_\ex\ne\es$ iff the output list contains a word from~$X$.
\end{proof}

Taking into account Statement~\ref{HierNormal} we
get the following corollary.

\begin{corollary}\label{ex-complete}
  Each class $\Sigma_k^p$, $\Pi_k^p$ of the polynomial hierarchy
  contains an RR problem that complete for a class under
  $\lenl$-reductions (and under 
  polynomial reductions).
\end{corollary}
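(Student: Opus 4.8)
The plan is to combine Theorem~\ref{first} with the normality of the polynomial-hierarchy classes (Proposition~\ref{HierNormal}). Theorem~\ref{first} manufactures, for any chosen nonempty language $X$, a filter whose RR problem is sandwiched as $X\lelog\reg(X_\ex)\ledtnl X$; normality is exactly the ingredient that upgrades the right-hand disjunctive reduction back to an ordinary $m$-reduction, which is what is needed to place $\reg(X_\ex)$ inside the class rather than merely below it under $\ledtnl$.

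Concretely, I would fix $\C$ to be one of $\Sigma^p_k$ or $\Pi^p_k$ and let $X$ be a standard $\lelog$-complete language for $\C$ (the appropriate $k$-alternation quantified Boolean formula problem, which is complete even under log space reductions). Applying Theorem~\ref{first} to $X$ produces a filter $X_\ex$ with
\[
  X\lelog\reg(X_\ex)\ledtnl X.
\]
For membership $\reg(X_\ex)\in\C$: by Proposition~\ref{HierNormal} the class $\C$ is normal and contains the $m$-complete language $X$, so Corollary~\ref{normal-classes} yields that $X$ itself is normal. Since the target $X$ is normal, the reduction $\reg(X_\ex)\ledtnl X$ sharpens to $\reg(X_\ex)\lenl X$ by the earlier observation that a $\ledtnl$-reduction into a normal language is in fact a $\lenl$-reduction. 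As $\FNL$ functions are computable in polynomial time, every $\lenl$-reduction is a polynomial $m$-reduction, so $\C$ (being closed under polynomial reductions) is closed under $\lenl$; hence $\reg(X_\ex)\in\C$.

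Hardness is immediate from the left reduction. A $\lelog$-reduction is in particular a $\lenl$-reduction, so $X\lenl\reg(X_\ex)$, and every $Y\in\C$ satisfies $Y\lelog X\lenl\reg(X_\ex)$; by transitivity of $\lenl$ (the $\FNL$ composition lemma) we get $Y\lenl\reg(X_\ex)$, so $\reg(X_\ex)$ is $\C$-hard under $\lenl$. Together with membership this gives $\lenl$-completeness. Completeness under polynomial reductions then follows for free, since $\lenl\mathbin{\subseteq}\lep$: membership in $\C$ is unchanged and hardness only weakens from $\lenl$ to $\lep$.

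The genuine technical content has already been discharged in Theorem~\ref{first} (the construction of the regularly immune square filter and its $\FNL$ algorithms) and in Proposition~\ref{HierNormal}, so no real obstacle remains at this stage. The single point that must be handled carefully is the passage from the disjunctive reduction supplied by Theorem~\ref{first} to an ordinary $m$-reduction: this is precisely where normality of $\C$—transmitted to the complete language $X$ through Corollary~\ref{normal-classes}—is indispensable, and it is the reason Proposition~\ref{HierNormal} is invoked.
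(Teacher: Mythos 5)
Your proposal is correct and follows exactly the route the paper intends: the paper derives Corollary~\ref{ex-complete} by combining Theorem~\ref{first} with Proposition~\ref{HierNormal}, using Corollary~\ref{normal-classes} to make the $\lelog$-complete language $X$ normal and the proposition that $A\ledtnl B$ with $B$ normal gives $A\lenl B$, precisely as you spell out. Your additional bookkeeping (hardness via transitivity of $\lenl$ from $X\lelog\reg(X_\ex)$, membership via $\lenl\subseteq\lep$ and closure of $\Sigma^p_k$, $\Pi^p_k$ under polynomial reductions) is exactly the detail the paper leaves implicit.
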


\section{Universality of generalized nondeterministic models}\label{GNA-sect}

RR problems are closely related to models of generalized
nondeterminism (GNA) introduced in~\cite{Vya09}. GNA classes are
parametrized by languages of infinite words (certificates). It was
shown in~\cite{Vya11} that each GNA class contains  RR problem
complete for the class under  $\lelog$-reductions. 
A filter of  RR problem is formed by prefixes of certificates for
the GNA class.

Note that filters in the proof of Theorem~\ref{first} are not prefix
closed.  So they do not correspond any GNA class. To
prove universality of GNA classes we need a more sophisticated construction.

What RR problems correspond to GNA classes? To be prefix closed is a
necessary condition only. The second condition is the following: each
filter word is a proper prefix of a filter word. These two conditions
guarantee that the filter is the prefix set for some set of
certificates.

To satisfy the second condition a very simple modification of a filter
is needed.

\begin{prop}
  If $L\subseteq \Sigma^*$, $\#\notin \Sigma$ then
$$
\reg(L)\lelog \reg(L\#^*) \lelog \reg(L).
$$
\end{prop}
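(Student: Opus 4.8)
The plan is to prove the two reductions separately, each by exhibiting a $\lelog$-computable map on DFA descriptions, and to check correctness by relating the nonemptiness of intersections $L(A)\cap L$, $L(A)\cap L\#^*$. Since $\reg(\cdot)$ is an $m$-reduction target, both reductions amount to transforming an input automaton $A$ into a new automaton whose accepted language witnesses membership in the right filter, while preserving the yes/no answer.

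For the first reduction $\reg(L)\lelog\reg(L\#^*)$, I would take an instance $A$ (with $L(A)\subseteq\Sigma^*$, say after a trivial check or by intersecting with $\Sigma^*$) and build an automaton $A'$ recognizing $L(A)\#^*$, i.e.\ append a new accepting state with a self-loop on $\#$ reachable from every old accepting state. This is clearly log-space computable since it only adds one state and a few transitions to the transition table. Correctness is immediate: $L(A')\cap L\#^*\ne\es$ iff some $u\#^k\in L(A')$ lies in $L\#^*$, and because $\#\notin\Sigma$, the delimiter count must match, so this holds iff $u\in L(A)$ and $u\in L$, i.e.\ iff $L(A)\cap L\ne\es$.

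For the second reduction $\reg(L\#^*)\lelog\reg(L)$, I would start from an instance $B$ (with $L(B)\subseteq(\Sigma\cup\{\#\})^*$) and produce an automaton $B'$ over $\Sigma$ whose language is the set of $\Sigma$-words obtained from $L(B)$ by stripping a trailing block of $\#$'s, i.e.\ $B'$ should accept $u\in\Sigma^*$ iff $u\#^k\in L(B)$ for some $k\geq 0$. This is done by redefining the accepting set of $B'$ to be those states of $B$ from which an accepting state is reachable by reading only $\#$'s (a reachability computation internal to the fixed automaton $B$, hence computable in log space), and restricting transitions to $\Sigma$. Then $L(B')\cap L\ne\es$ iff some $u\in L$ has $u\#^k\in L(B)$, which is exactly $L(B)\cap L\#^*\ne\es$, as required.

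The main obstacle is bookkeeping about which words actually witness the intersection, in particular handling words of $L(B)$ that contain $\#$ in interior positions or that are not of the form $u\#^k$: such words are irrelevant because $L\#^*$ only contains words with $\#$'s confined to a trailing block, so the construction of $B'$ must correctly ignore them, and I would argue this is automatic from the definition of the new accepting set via $\#$-reachability. I expect no genuine difficulty beyond verifying that all the automaton surgery (adding/removing one $\#$-self-loop, recomputing accepting states by $\#$-reachability, restricting the alphabet) stays within deterministic log space, which follows from the fact that these are local edits of the transition table plus a single reachability check on the fixed state graph.
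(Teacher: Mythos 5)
The one step that does not survive scrutiny as written is the log-space claim in your second reduction. You justify recomputing the accepting set by ``a reachability computation internal to the fixed automaton $B$, hence computable in log space,'' but directed reachability is $\NL$-complete and is not known to lie in deterministic log space, so an appeal to generic graph reachability would only yield an $\NL$-computable map, not a $\lelog$-reduction. The step is salvageable, and this is exactly the point the paper takes care to address: because $B$ is \emph{deterministic}, the states reachable from $q$ by reading only $\#$'s form the orbit of $q$ under the single map $q\mapsto\delta_B(q,\#)$, i.e.\ a path in an out-degree-one (functional) graph, and one decides in log space whether this orbit meets an accepting state by iterating the map at most $|Q|$ times with a counter. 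Equivalently, ``$\delta_B(q,\#^k)$ accepting for some $k$'' is the instance of $\reg(\#^*)$ with initial state~$q$, and the paper cites $\reg(\#^*)\in\DL$ (Corollary~1 of~\cite{Vya11}) for precisely this purpose. You must invoke determinism (or that corollary) explicitly; as stated, the complexity claim is false in general.

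Everything else is correct, and your second reduction is otherwise the paper's own: change only the accepting set to those states from which an accepting state is $\#$-reachable (your additional restriction of transitions to $\Sigma$ is harmless but unnecessary, since words of $L(B)$ outside $\Sigma^*$ can never meet $L\subseteq\Sigma^*$). Your first reduction takes a mildly different route: the paper observes that $L=L\#^*\cap\Sigma^*$ is a regular restriction of $L\#^*$ and invokes Lemma~4 of~\cite{Vya11}, which in effect just intersects the instance with a fixed DFA for $\Sigma^*$; your appended $\#$-self-loop state is correct but redundant, since once you intersect $L(A)$ with $\Sigma^*$ (which you need anyway to keep determinism when adding $\#$-transitions), the resulting language meets $L\#^*$ iff it meets $L$, with no appending required.
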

\begin{proof}
  The language $L$ is a regular restriction of $L\#^*$.  We get the
  first reduction from this observation and~\cite[Lemma~4]{Vya11}.

  Now we construct the second reduction. Let  $A$ be an instance of the
  problem  $\reg(L\#^*)$. To output an instance $B$ of $\reg(L)$
  the reducing algorithm changes the set of
  accepting states of DFA~$A$  only. A state $q$ is 
  accepting for the automaton~$B$ iff $\delta_A(q,\#^k)$ is an
  accepting state of~$A$ for some~$k$.

  To check correctness of this reduction take a word in the form
  $w\#^k$, $w\in L\cap L(A)$. Then $B$ accepts $w$ by construction. In
  the other direction: if
  $B$ accepts $w\in L$ then after reading $w$  the automaton~$A$ moves to a
  state such that an accepting state is reachable by reading a
  sequence of $\#$. 

  Note that $\reg(\#^*)\in\DL$~\cite[Corollary~1]{Vya11}. Using this
  observation a log space algorithm for the second reduction can be
  easily constructed.
\end{proof}

Thus for any RR problem with a prefix closed filter there exists an
equivalent RR problem with a filter that coincides with the prefix set
of a language of infinite words.

So universality of GNA classes follows from the following
generalization of Theorem~\ref{first}.

\begin{theorem}\label{third}
  For any nonempty language $X$ there exists a prefix closed filter
  $P$ such that
  \begin{equation}\label{prefred}
    X\lelog \reg(P)\ledtnl X.
  \end{equation}
\end{theorem}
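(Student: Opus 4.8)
The plan is to mimic the proof of Theorem~\ref{first} while replacing the filter $X_\ex$ by a prefix-closed variant. The difficulty is that $X_\ex = \overline{\ex(\bar X)}$ is far from prefix closed: it is the complement of an image, and complements rarely respect the prefix order. So the first task is to design a prefix-closed filter $P$ that still enjoys the two crucial properties used before: (i) RR problem with filter $P$ is trivially positive on every infinite regular language, so that only finite instances carry information, and (ii) on finite instances sitting inside a regularly immune square set the answer can be decoded back to $X$ by an $\ledtnl$-reduction via Lemma~\ref{output}.

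First I would take the prefix closure of the relevant square set. Concretely, let $\dex$ denote the prefix set of $\ex(X)$ together with enough auxiliary words, and build $P$ as the prefix closure of $\overline{\ex(\bar X)}\cap D$ (or of a suitable analogue), so that $P$ is automatically prefix closed by construction. The forward reduction $X\lelog\reg(P)$ should again be a monoreduction: the map $w\mapsto A_{\ex(w)}$ sends a word into a single-square instance, and I must check that $\ex(w)\in P$ iff $w\in X$, which reduces to verifying the separation conditions~\eqref{monocond} now phrased for the prefix-closed $P$. Here Proposition~\ref{common-prefix} and Corollary~\ref{antichain} are the key tools: because no $D$-word is a prefix of another, adding all prefixes of $\ex$-images does not create unwanted collisions among the squares themselves, so membership of a full square $\ex(w)$ in $P$ can still be made equivalent to $w\in X$.

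For the backward reduction $\reg(P)\ledtnl X$ I would reuse the architecture of the proof of Theorem~\ref{first} almost verbatim. Given an instance $A$, the algorithm first tests, using Lemma~\ref{D-only} and Proposition~\ref{infty}, whether $L(A)$ is infinite or escapes the square set $D$; in either case it emits a one-element list $\{x_0\}$ with $x_0\in X$ fixed, relying on regular immunity of $D$ (Lemma~\ref{Dimmune}) to guarantee the answer is positive. Otherwise $L(A)$ is a finite square language inside $D$, and I apply the extraction algorithm of Lemma~\ref{output} to output $\ex^{-1}$ of the genuine full squares in $L(A)$, discarding the merely-prefix words. The new wrinkle is that $P$ now contains many short prefix words that are not full squares, so the decoding step must distinguish accepted words that are honest $\ex$-images from accepted proper prefixes; the guessing-a-square machinery combined with Proposition~\ref{betaNL} and the length bound $k<|Q|$ from Proposition~\ref{FinRegLen} should let an $\NL$ oracle certify squareness, so only full squares feed into the disjunction.

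The main obstacle I anticipate is exactly this separation between full-square members of $P$ and the prefix-closure debris. Prefix-closing can introduce infinitely many short words, which threatens both regular immunity (an infinite regular prefix chain could sneak in) and the finiteness analysis. I would control this by keeping the prefix closure restricted so that its infinite part is always caught by the ``$L(A)$ infinite or $L(A)\setminus D\ne\es$'' branch, and by arguing, via Proposition~\ref{common-prefix}, that the only words of $P$ that can be \emph{full squares} are precisely the $\ex$-images, so the decoding in Lemma~\ref{output} correctly recovers $X$. Once these bookkeeping points are settled, correctness follows exactly as in the proof of Theorem~\ref{first}, and the $\FNL$ implementability of each stage is inherited from Lemmas~\ref{D-only}, \ref{output} and Propositions~\ref{infty}, \ref{betaNL}.
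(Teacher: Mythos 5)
Your proposal has a genuine gap, and it is exactly at the point where the real difficulty of this theorem lives: the choice of filter. Taking $P$ to be the prefix closure of $\overline{\ex(\bar X)}\cap D$ gives $P=\mathrm{Pref}(\ex(X))$, the set of prefixes of $\ex$-images of $X$-words. This is prefix closed, and your forward monoreduction is fine (by Corollary~\ref{antichain} a square $\ex(w)$ is a prefix of some $\ex(x)$, $x\in X$, iff $w\in X$). But it destroys the second pillar of the Theorem~\ref{first} architecture: triviality of the RR problem outside a small region. In Theorem~\ref{first} the escape branch is sound because $X_\ex\supseteq\overline{D}$; your $P$ is contained in the prefix set of $D$ and contains essentially nothing outside it. Concretely, take $L(A)=\{000\}$ or $L(A)=000\{0,1\}^*$: these instances escape $D$ (the second is moreover infinite), so your algorithm emits the list $\{x_0\}$ with $x_0\in X$ and answers positively, yet $L(A)\cap P=\es$, since no word of $D$ begins with $000$ (words of $D$ start with $\beta(x)$ or with $11$), hence no prefix of one does either. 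Regular immunity of $D$ (Lemma~\ref{Dimmune}) only says an infinite regular language meets $\bar D$, which certifies positivity only when the filter \emph{contains} $\bar D$ --- true for $X_\ex$, false for your $P$. Your plan to keep the prefix closure ``restricted'' so that the infinite part is ``caught by the branch'' is backwards: restricting the closure makes that branch unsound (false positives), not merely incomplete.

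The missing idea, which is the heart of the paper's proof, is to keep the filter in complement-of-image form and instead engineer the map so that the image of the complement is closed under appending suffixes. The paper replaces $X$ by $X'=0X\cup\{\eps\}$ and $\ex$ by the injective map $\dex$ of~\eqref{dex-def}, whose image is $\D\cup\{\eps\}$: words $0x$ go to squares $\ex(x)\in D$, while words $1x$ are spread, via a log-space pairing bijection $\pi$ (built with Lemma~\ref{finite-exception}), over \emph{all} proper extensions $D\{0,1\}^+$. Then $\dex(\overline{X'})=D\{0,1\}^+\cup\ex(\bar X)$ is suffix (extension) closed, so $P=X'_\dex$ is prefix closed and, crucially, contains $\overline{\D}$. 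This changes the backward reduction in ways your sketch does not survive. The triviality test becomes ``$L(A)\subseteq\D$?'', a genuinely new \NL{} algorithm (Lemma~\ref{upD-only}, with its case list and the length cutoff from Proposition~\ref{short}), not the finiteness test of Proposition~\ref{infty} and Lemma~\ref{D-only}: a language can be infinite yet lie inside $\D$, and such instances must go to the extraction phase, not the trivial branch. Likewise Lemma~\ref{output} cannot be reused ``almost verbatim'': its uniqueness statement (Proposition~\ref{preimage-bound}) assumes $L(A)$ consists of squares, which now fails; the paper substitutes Proposition~\ref{unique2}, keyed to coaccessible states, and Lemma~\ref{list}, which enumerates the \emph{finite} set $R_D$ of $D$-prefixes of $L(A)$-words, its finiteness and shortness being Proposition~\ref{short} --- a pumping argument, not immunity. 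Your instinct that only honest full squares should feed the disjunction is the right correctness criterion on instances inside $\D$ (a word $uv\in\D$ with $u\in D$, $v\ne\eps$ lies in $\dex(\overline{X'})$, hence outside $P$, so the square-acceptance test must fire exactly at the end of the square), but without the $\dex$ construction there is no filter for which your trivial branch is correct, so the proof as proposed does not go through.
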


To prove Theorem~\ref{third} we follow the same plan as for
Theorem~\ref{first}. 

Again we use monoreductions. We choose the
filter $P$ in the form $X'_f$ for some map~$f$ and some set $X'$ which
is $\lelog$-equivalent to~$X$. Actually  we will use $X'=0X\cup\{\eps\}$. 

To make the filter~$P$ prefix closed we choose an appropriate
map~$f$. Namely, we require that $f$ maps words in the form 
$0\{0,1\}^*$ to a prefix code~$D$ (i.e. a set of words such that no
word in the set is a prefix of an another word in the set).
Also we require that~$f$  maps all nonempty words to the set $\D= \{w:
w= uv, u\in D\}$ of all possible extensions of~$D$ words. Then
$f(\overline{X'})$ is suffix closed while $P=X'_f =
\overline{f(\overline{X'})}$ is prefix closed.

By construction $P$ contains $\bar\D$. So for an instance $A$ of the
RR problem 
$\reg(P)$ the answer is positive if
$L(A)\sm\D\ne\es$. The latter condition plays now a role of
infiniteness condition in the proof of Theorem~\ref{first}. 

To keep arguments from the previous section a check of
$L(A)\sm\D\ne\es$ should belong to \NL{} and generation of the list
containing all prefixes of words from $L(A)\cap D$ should be done by a
$\FNL$ algorithm provided $L(A)\subset\D$. 

Now we elaborate details of this plan.

We use the set $D$ from the previous section as a prefix code (see
Corollary~\ref{antichain}). 
  
The set  $\D$ is not regularly immune. But in the crucial case
$L(A)\subset \D$ (see the above plan) the $D$-prefix set  
\begin{equation*}
R_D=\{ w: w\in D\ \text{and}\ wv\in L(A)\}
\end{equation*}
is finite. Below we modify algorithms from Section~\ref{red-sect} to
deal with the sets of this form.

Now we upperbound the cardinality of $R_D$ and lengths of words in $R_D$.

\begin{prop}\label{short}
  If $L(A)\subset\D$, where $A$ is DFA with the state set~$Q$,
  $|Q|=n$, and $w\in R_D$ then
  $$|w|\leq 2(n-3)+4\sqrt{n-3}+10.$$
\end{prop}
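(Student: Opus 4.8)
The plan is to reduce the claimed inequality to a bound on the length of the code word and then to establish that bound by a pumping argument. Since $w\in R_D\subseteq D$, we have $w=\ex(x)$ for some word $x$, and by the length computation in Lemma~\ref{Dimmune} its length is $|w|=2|x|^2+4|x|+10$. Hence it suffices to prove that $|x|^2+3\le n=|Q|$: from this, $|x|^2\le n-3$ and $|x|\le\sqrt{n-3}$, and substituting both termwise gives $|w|=2|x|^2+4|x|+10\le 2(n-3)+4\sqrt{n-3}+10$, which is exactly the asserted bound. So the whole proposition reduces to the single inequality $|x|^2+3\le n$.

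To prove $|x|^2+3\le n$ I would argue by contradiction. Recall that the first half of $w=\ex(x)$ contains the block $0^{|x|^2+3}$. Because $w\in R_D$, the word $w$ is a prefix of some $wv\in L(A)$; fix an accepting path of $A$ on $wv$. If $|x|^2+3>n$, then while the automaton reads this run of zeros it passes through more than $|Q|$ states, so by the pigeonhole principle some state repeats inside the run. This yields a cycle whose label is a block of zeros $0^{c}$ with $c\ge 1$ sitting entirely within the first zero-run. Pumping this cycle shows that for every $k\ge 0$ the word $w_k$ obtained from $wv$ by replacing the first block $0^{|x|^2+3}$ with $0^{|x|^2+3+kc}$ again lies in $L(A)$, and hence in $\D$.

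The crux---and the step I expect to be the main obstacle---is to show that for large $k$ the pumped word $w_k$ actually leaves $\D$, contradicting $L(A)\subseteq\D$. Here the prefix-code structure of $D$ enters. By the recovery property established in the proof of Proposition~\ref{common-prefix}, the word coded by a $D$-word is determined by the first occurrence of $11$ at an even position; since $w_k$ still begins with $\beta(x)11$ and $\beta(x)$ itself contains no $11$ at an even position, the \emph{only} $D$-word that could be a prefix of $w_k$ is $\ex(x)$ itself. But in $\ex(x)$ the maximal run of zeros at this location has the fixed length $|x|^2+3$ (it is soon terminated by a $1$ coming from the following factor $\beta(x)1^2$), whereas for $k$ large enough the corresponding run of zeros in $w_k$ is strictly longer; hence $\ex(x)$ is not a prefix of $w_k$, so $w_k$ has no $D$-prefix and $w_k\notin\D$. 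This contradicts $L(A)\subseteq\D$, proving $|x|^2+3\le n$ and with it the proposition. The only care needed is the uniqueness of the candidate $D$-prefix and the fact that the first zero-run of $\ex(x)$ is of bounded length, both immediate from Proposition~\ref{common-prefix}.
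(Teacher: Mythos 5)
Your proof is correct and takes essentially the same route as the paper's: both reduce the claim to showing $|x|^2+3\le n$ and derive a contradiction by pumping the first zero-run of $w=\ex(x)$ inside an accepting word $wv$, then invoking Proposition~\ref{common-prefix} to conclude that the only candidate $D$-prefix of the pumped word is $\ex(x)$ itself, which fails. The only difference is cosmetic: you spell out the run-length reason why $\ex(x)$ is not a prefix of the pumped word, a step the paper merely asserts.
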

\begin{proof}
  By definitions a word $w\in D$ has a form
  $$\beta(x)1^20^{\ell^2+3}\beta(x)1^20^{\ell^2+3},$$ where
  $\ell=|x|$ and  $wz\in L(A)$ for some~$z$. 

  If $\ell^2+3\geq n$ then while reading the word $0^{\ell^2+3}$ the
  automaton visits some state at least two times. Thus  for
  some integer~$k>0$ and all integers $i\geq0$ we have
  $$
  w_i = \beta(x)1^20^{\ell^2+3+ik}\beta(x)1^20^{\ell^2+3}z\in L(A).
  $$
  Note that for  $i>0$ a word $w_i$ does not have a $D$ prefix in
  contradiction with the hypothesis $L(A)\subset\D$. Indeed such a
  prefix must have a prefix $\beta(x)1^2$ and due to
  Proposition~\ref{common-prefix} must coincide with the word~$w$. But
  $w$ is not a prefix of $w_i$ for $i>0$.

  Thus $\ell^2+3<n$. The length of $w$ equals 
  $2\ell^2+4\ell+10$. It implies the required inequality.
\end{proof}

To upperbound the cardinality of $R_D$ we modify
Proposition~\ref{preimage-bound}.   

\begin{prop}\label{unique2}
  Let $L(A)\subset \D$. For any coaccessible\footnote{A coaccessible
  state is a state from which it exists a path to an
  accepting state.}  state~$t$, integer~$k$ and a
  state~$m$ there exists at most one word $uu\in D$ such that $|u|=k
  $ and $\delta_A(s,u) = m$, $\delta_A(m,u)=t$.
\end{prop}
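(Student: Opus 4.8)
The plan is to adapt the proof of Proposition~\ref{preimage-bound} (which handles the case $L(A)\subset D$) to the setting $L(A)\subset\D$, where words now have $D$-prefixes rather than being themselves squares in $D$. The statement asserts that a word $uu\in D$ is uniquely pinned down by the data $(k,m,t)$ where $k=|u|$, $m=\delta_A(s,u)$ and $t=\delta_A(m,u)$. The key structural fact I would exploit is Proposition~\ref{common-prefix}: any two words of $D$ sharing a long enough common prefix must in fact agree, since the first occurrence of $11$ at an even position in $\beta(x)11$ reveals the whole word. This incomparability is what forces uniqueness.

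First I would argue by contradiction in the same spirit as Proposition~\ref{unique}. Suppose two distinct words $uu\neq vv$ in $D$ both satisfy $|u|=|v|=k$, $\delta_A(s,u)=\delta_A(s,v)=m$, and $\delta_A(m,u)=\delta_A(m,v)=t$. Because these four transition-function values coincide, the automaton cannot distinguish the "crossed" word $uv$ from $uu$ and $vv$ at the relevant states: reading $u$ from $s$ lands at $m$, and reading $v$ from $m$ lands at $t$, exactly as reading $v$ then $v$ would. Since $t$ is coaccessible, there is a word $z$ with $\delta_A(t,z)\in Q_a$, and then $uvz\in L(A)\subset\D$. So $uv$ (extended by $z$) is the prefix-$z$ of a word in $L(A)$, forcing $uv$ to begin with some word of $D$.

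The heart of the argument, and the step I expect to be the main obstacle, is deriving a contradiction from $uvz\in L(A)\subset\D$. The $D$-prefix of $uvz$ must begin with the prefix $\beta(x)1^2$ coming from $u$ (since $u$ itself is the first half of the $D$-word $uu$, it starts with $\beta(x)1^2$), and by Proposition~\ref{common-prefix} this prefix determines the whole $D$-word uniquely, so the $D$-prefix of $uvz$ must equal $uu$. But then $uu$ is a prefix of $uvz$, which forces $u$ to be a prefix of $vz$ in a way incompatible with $uu\ne vv$ and $|u|=|v|=k$: comparing the second halves, $u$ would have to be a prefix of $v$ (or conversely), and with equal lengths this gives $u=v$, contradicting $uu\ne vv$. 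I would need to check carefully that the occurrence of $0^3$-free common prefixes (the remark following Proposition~\ref{common-prefix}) rules out any spurious alternative $D$-prefix of $uvz$ arising from the boundary where $u$ meets $v$.

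Thus no two distinct such words exist, establishing uniqueness. The role of coaccessibility (replacing the accepting-state hypothesis of Proposition~\ref{unique}) is precisely to guarantee the extension $z$ exists so that $uv$ lies in a word of $L(A)$, which is what lets us invoke $L(A)\subset\D$ and the prefix-code structure of $D$. The remaining bookkeeping---that $\beta(x)1^2$ is genuinely a prefix of $u$ and that the even-position occurrence of $11$ is unambiguous---follows directly from the definition of $\ex$ and Proposition~\ref{common-prefix}, so I would state these briefly rather than recompute them.
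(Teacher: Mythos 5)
Your proposal is correct and takes essentially the same route as the paper: the paper also forms the crossed word $uv$, uses coaccessibility of $t$ to extend it to some $uvw\in L(A)$, and invokes Proposition~\ref{common-prefix} to conclude that $uvw$ contains no $D$-prefix, contradicting $L(A)\subset\D$. Your version merely spells out the step the paper leaves implicit—that any $D$-prefix of $uvw$ would be forced by the first occurrence of $11$ at an even position to equal $uu$, whence $u=v$ by $|u|=|v|=k$—which is a faithful elaboration rather than a different argument.
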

\begin{proof}
  Suppose there are two words $uu$ and $vv$ satisfying these 
  conditions. Then $uvw\in L(A)$ for some~$w$. From
  Proposition~\ref{common-prefix} we conclude that a word $uvw$ does
  not contain $D$ prefixes in contradiction with the hypothesis
  $L(A)\subset \D$. 
\end{proof}

\begin{corollary}\label{squares-few}
  If $L(A)\subset \D$ then $|R_D|\leq2|Q|^3(1+o(|Q|))$.
\end{corollary}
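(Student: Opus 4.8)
The plan is to reuse, almost verbatim, the counting argument of Remark~\ref{poly-bound}, replacing the length estimate of Proposition~\ref{FinRegLen} by Proposition~\ref{short}. First I would record that every word $w\in R_D$ is a \emph{complete} $D$-word, hence a square $w=uu$ with $u=\beta(x)1^20^{\ell^2+3}$ and $\ell=|x|$; in particular its ``half'' $u$, its midpoint state and its terminal state are all well defined.

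To each $w=uu\in R_D$ I would attach the triple $(m,t,k)$, where $m=\delta_A(s,u)$ is the state reached at the midpoint, $t=\delta_A(s,uu)$ is the state reached after reading the whole word, and $k=|u|$. Since $w\in R_D$ there is some $v$ with $uuv\in L(A)$, so an accepting state is reachable from~$t$ by reading~$v$; that is, $t$ is \emph{coaccessible}. Proposition~\ref{unique2} asserts precisely that, for a fixed coaccessible state~$t$, integer~$k$ and state~$m$, at most one word $uu\in D$ satisfies $|u|=k$, $\delta_A(s,u)=m$ and $\delta_A(m,u)=t$. Hence the assignment $w\mapsto(m,t,k)$ is injective on $R_D$, and $|R_D|$ is bounded by the number of admissible triples.

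It then remains only to count the triples. The midpoint state~$m$ ranges over at most $|Q|$ values and the coaccessible terminal state~$t$ over at most $|Q|$ values. For the length parameter I would invoke Proposition~\ref{short}: every $w\in R_D$ satisfies $|w|\leq 2(n-3)+4\sqrt{n-3}+10=2|Q|(1+o(1))$, so $k=|u|\leq|w|$ takes at most $2|Q|(1+o(1))$ values. Multiplying the three counts yields $|R_D|\leq |Q|\cdot|Q|\cdot 2|Q|(1+o(1))=2|Q|^3(1+o(1))$, which is within the asserted bound $2|Q|^3(1+o(|Q|))$.

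Since both Proposition~\ref{unique2} and Proposition~\ref{short} are already in hand, there is no genuine obstacle; the computation is just the assembly carried out in Remark~\ref{poly-bound} for the bound $|L(A)|\leq|Q|^3/2$ on square languages, now with $\D$ in place of $D$. The only point that needs checking is that the terminal state~$t$ of a word in $R_D$ is coaccessible, so that Proposition~\ref{unique2} genuinely applies — and this is immediate from the defining condition $uuv\in L(A)$.
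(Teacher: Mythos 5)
Your proof is correct and follows exactly the route the paper intends: the corollary is stated without proof precisely because it is the counting of Remark~\ref{poly-bound} redone with Proposition~\ref{unique2} in place of Proposition~\ref{preimage-bound} and Proposition~\ref{short} in place of Proposition~\ref{FinRegLen}, which is what you assemble, including the needed observation that the terminal state is coaccessible. (Minor note: using $k=|u|=|w|/2$ rather than $k\leq|w|$ would even give the sharper bound $|Q|^3(1+o(1))$, but your estimate stays within the stated one.)
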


Correctness of the algorithm from Lemma~\ref{output} is based on
Proposition~\ref{preimage-bound}. Replacing it by
Proposition~\ref{unique2} we get the following lemma. 

\begin{lemma}\label{list}
  There exists a map $f\in\FNL$ that generates the element list of
  $\ex^{-1}(R_D)$ provided $L(A)\subset\D$.
\end{lemma}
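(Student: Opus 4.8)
The plan is to follow the proof of Lemma~\ref{output} almost verbatim, substituting Proposition~\ref{unique2} for Proposition~\ref{preimage-bound} and the notion of a \emph{coaccessible} state for that of an accepting state. Recall that $R_D$ consists of the $D$-words that are prefixes of words in $L(A)$; since every word of $D$ is a square, each $w\in R_D$ has the form $w=uu$ with $u=\beta(x)1^20^{|x|^2+3}$. I would parametrize such a word by the triple $(m,t,k)$, where $m=\delta_A(s,u)$ is the middle state, $t=\delta_A(s,uu)=\delta_A(m,u)$ is the state reached after the whole square, and $k=|u|$. Because $uu$ extends to an accepted word, the state $t$ is coaccessible, so Proposition~\ref{unique2} applies and shows that the triple $(m,t,k)$ determines $uu$ uniquely. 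Proposition~\ref{short} bounds $|w|$ linearly in $n=|Q|$, whence $k=|w|/2\leq (n-3)+2\sqrt{n-3}+5$ is polynomially bounded; this lets the algorithm enumerate $k$.

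The \FNL{} algorithm then enumerates all triples $(m,t,k)$ with $m,t\in Q$ and $k$ below the bound above. For each triple it first checks, by an \NL{} reachability query, that $t$ is coaccessible, i.e. that some accepting state is reachable from $t$. It then asks an \NL{} oracle whether there exists a word $uu\in D$ realizing the triple. The corresponding \NL{} procedure is guessing a square with parameters $m,t$: it guesses $u$ symbol by symbol while maintaining two simulations of $A$ reading $u$, one started at $s$ and required to land in $m$, the other started at $m$ and required to land in $t$; in parallel it counts the guessed symbols to enforce $|u|=k$ and verifies $uu\in D$ via Proposition~\ref{betaNL}. Using $\NL=\co\NL$ a negative answer is verifiable on nondeterministic log space as well.

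When the oracle reports that such a $uu$ exists, the algorithm decodes and outputs $x=\ex^{-1}(uu)$ exactly as in Lemma~\ref{output}. It reconstructs $u=\beta(x)1^20^{|x|^2+3}$ by guessing its symbols in pairs $\al\bar\al$ (the two images of a bit under $\beta$), running the two simulations from $s$ (targeting $m$) and from $m$ (targeting $t$), and, at each bit, using the oracle to decide which of the two candidate pairs admits a completion to a full $u$ with the two simulations landing in $m$ and $t$ respectively. Uniqueness from Proposition~\ref{unique2} guarantees that exactly one candidate succeeds, so the next bit of $x$ is determined and output. Correctness follows because a word $uu\in D$ lies in $R_D$ precisely when $\delta_A(s,uu)$ is coaccessible, so the enumeration visits exactly the elements of $R_D$, each once. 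The main point to verify---and the only genuine departure from Lemma~\ref{output}---is that the accepting-state condition may be replaced by coaccessibility without breaking either the uniqueness used in the decoding (now supplied by Proposition~\ref{unique2}) or the \NL{} complexity of the tests, coaccessibility being itself an \NL{} reachability property.
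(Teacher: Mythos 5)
Your proposal is correct and follows essentially the same route as the paper: mimic the algorithm of Lemma~\ref{output}, replacing the accepting-state condition by coaccessibility (equivalently, the paper's prefix automaton $A_p$), with uniqueness now supplied by Proposition~\ref{unique2} and coaccessibility tested via an \NL{} oracle. You are in fact more explicit than the paper's terse proof, usefully spelling out the triple parametrization and the length bound from Proposition~\ref{short} (needed since $L(A)$ need not be finite here), both of which the paper leaves implicit.
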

\begin{proof}
  The algorithm mimics the algorithm from Lemma~\ref{output} but for
  guessed squares it tests a membership to prefixes of~$L(A)$ (instead
  of a membership to~$L(A)$ itself). The prefix set of $L(A)$ is
  recognized by DFA $A_p$ having the same state set and the transition
  table as DFA~$A$. Accepting states of $A_p$ are coaccessible states
  of~$A$.

  It is possible to check coaccessibility on nondeterministic log
  space. So a \FNL{} algorithm is able to apply this test.

  Correctness of the algorithm follows from Proposition~\ref{unique2}. 
\end{proof}

\begin{lemma}\label{upD-only}
  Testing $L(A)\subset\D$ is in \NL. 
\end{lemma}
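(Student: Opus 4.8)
The plan is to prove the lemma through its complement, relying on the Immerman--Szelepcsényi equality $\NL=\co\NL$. The complement condition is that $L(A)\not\subset\D$, i.e.\ there exists an accepted word $w$ having \emph{no} prefix in $D$. The key structural fact I would exploit is that $D$ is a prefix code (Corollary~\ref{antichain}) and, by Proposition~\ref{common-prefix}, the candidate $D$-prefix of any word is \emph{unique}: reading $w$ from the start, the first even-position occurrence of $11$ marks the end of the leading block $\beta(x_0)$ and so determines $x_0$ (and $\ell=|x_0|$); the only word of $D$ that can be a prefix of $w$ is then $\ex(x_0)=\beta(x_0)1^20^{\ell^2+3}\beta(x_0)1^20^{\ell^2+3}$. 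Consequently $w$ has no $D$-prefix exactly when one of a small number of local failures occurs: either the leading even pairs never present a valid $\beta$-start (some even pair is $00$, or no even-position $11$ ever appears), or $w$ is shorter than $|\ex(x_0)|=2\ell^2+4\ell+10$, or $w$ deviates from $\ex(x_0)$ at some position within that span (a wrong symbol in a $0$-run, a misplaced marker, or a mismatch between the two copies of $\beta(x_0)$).

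The algorithm would guess the witness $w$ symbol by symbol along an accepting run of $A$, maintaining the current state as in the \emph{guessing a word} technique used in Theorem~\ref{promise} and Lemma~\ref{D-only}. In parallel it would nondeterministically guess which of the failure modes above certifies the absence of a $D$-prefix, together with the position at which that failure occurs, and verify it in a one-way log-space pass. Most failures are checked by a finite control together with a counter of read symbols; determining $x_0$ uses the pair-counting of Proposition~\ref{betaNL} to obtain $\ell$ and hence $m=\ell^2+3$. The only genuinely non-local check is a mismatch between the first copy of $\beta(x_0)$ (positions $0,\dots,2\ell-1$) and the second copy (positions $2\ell+2+m,\dots$): here I would guess a position $j<2\ell$, store the single symbol $w[j]$ and the index $j$, then advance the position counter to $2\ell+2+m+j$ and compare. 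Since the candidate is unique, exhibiting a single such discrepancy (or an early structural failure, or $w$ ending before the candidate completes) suffices to certify $w\notin\D$.

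The main obstacle is keeping all of this inside logarithmic space of the \emph{input} (the description of $A$): the position counter ranges up to $|w|$ and the $0$-run counter up to $\ell^2+3$, both of which are a priori unbounded. I would remove this difficulty by a pumping argument in the spirit of Proposition~\ref{short}, used now as a detector rather than a constraint. If the $0$-run of the candidate would exceed $|Q|$ symbols, then while reading that run $A$ repeats a state; deleting the corresponding cycle yields a word still accepted by $A$ whose leading $0$-run no longer has length $\ell^2+3$, so by uniqueness of the candidate it has no $D$-prefix and is already a witness. The same reasoning applies to an over-long leading $\beta$-block. Thus it suffices to search, on the one hand, for such pumpable cycles (a plain $\NL$ reachability guess on the graph of $A$), and on the other hand for witnesses whose candidate has $\ell^2+3<|Q|$; in the latter case every counter is bounded by $|Q|$ and by the resulting $O(|Q|)$ length bound of Proposition~\ref{short}, so the whole verification fits in log space. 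Combining the two searches gives an $\NL$ procedure for $L(A)\not\subset\D$, and $\NL=\co\NL$ yields the lemma. Throughout I would invoke $\NL=\co\NL$ once more, exactly as in the proofs of Lemmas~\ref{D-only} and~\ref{output}, to turn the co-reachability and coaccessibility subchecks into positive nondeterministic tests.
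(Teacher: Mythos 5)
Your overall skeleton is the same as the paper's: pass to the complement via $\NL=\co\NL$, guess a witness $w\in L(A)\cap\overline{\D}$ symbol by symbol along a run of $A$, and exploit the uniqueness of the candidate $D$-prefix (Proposition~\ref{common-prefix}) to reduce ``no $D$-prefix'' to a short list of local failure modes checked one-way in log space, as in Proposition~\ref{betaNL}. Your counter-based comparison of the two copies of $\beta(x_0)$ (store $w[j]$ and $j$, compare at position $2\ell+2+m+j$) is a legitimate alternative to the paper's trick of running two simulations of $A$ from guessed states $q_1,q_2$ as in~\eqref{q1q2}. But your space-bounding case split has a genuine gap: the two branches do not cover all witnesses. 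The pumping branch only detects accepted words that \emph{actually contain} a long $0$-run --- ``the $0$-run of the candidate'' forces a repeated state only if some accepted word realizes that run --- while the direct branch is restricted to candidates with $\ell^2+3<|Q|$. A witness can stop, or deviate from $\ex(x_0)$, long before the run appears, with $\ell^2+3\geq|Q|$. Concretely, let $A$ be the minimal DFA for the single word $\beta(x)11$ with $|x|=\ell\geq 3$. Then $|Q|=2\ell+4<\ell^2+3$, and $L(A)=\{\beta(x)11\}\not\subset\D$, since the unique candidate $\ex(x)$ has length $2\ell^2+4\ell+10>2\ell+2$; yet no accepting computation of $A$ traverses any cycle, so the pumpable-cycle search finds nothing, and the direct branch excludes this candidate. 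Your algorithm would wrongly certify $L(A)\subset\D$. The root error is using Proposition~\ref{short} as a bound on witnesses: it bounds $\ell$ only \emph{under the hypothesis} $L(A)\subset\D$ and says nothing about the candidate parameters of words outside~$\D$.

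The paper closes exactly this hole by using Proposition~\ref{short} in the contrapositive as an outright acceptance rule rather than via pumping: if some accepted word (more precisely, a guessed prefix extendable to an accepted word, whence the coaccessibility checks) exhibits a clean $\beta$-block that is too long --- the paper takes the form~\eqref{var-first} with $|x|=5|Q|$ --- then $L(A)\subset\D$ is already impossible, because by Proposition~\ref{common-prefix} that word's only possible $D$-prefix would be the correspondingly long $\ex(x)$, contradicting the length bound of Proposition~\ref{short}; the algorithm accepts without ever exhibiting a deviation inside that word. Once this rule is in place, $\ell$ is polynomially bounded by $|Q|$ in all remaining cases, so every counter you use (position, run length $\ell^2+3$, the mismatch target $2\ell+2+m+j$) is polynomially bounded and hence fits in log space, and your failure-mode analysis --- including too-short witnesses and early deviations --- then goes through for \emph{all} $\ell$ below the cap, with no pumping needed. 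So the repair is to replace your threshold $\ell^2+3<|Q|$ by this ``long clean $\beta$-block $\Rightarrow$ accept'' rule. Note also that your side claim that deleting a cycle from an over-long leading $\beta$-block yields a witness is unproven as stated (an odd-length cycle shifts parity, and you would have to argue the spliced word cannot acquire a valid $D$-prefix); the paper's rule makes that argument unnecessary.
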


\begin{proof}
  The statement is equivalent to
  $\reg(\overline{\D})\in \NL$ due to equality $\NL=\co\NL$. 

  The algorithm is similar to the algorithm from Lemma~\ref{D-only}.

  Let DFA  $A$ with the state set $Q$ be an instance of
  $\reg(\overline{\D})$.
  The algorithm guesses a word from $\overline\D\cap L(A)$ and verifies this
  condition. It uses log space and reads an input in one
  way.

  To detect a word from $\overline\D\cap L(A)$ the algorithm guesses a
  prefix of a word~$w\in L(A)$ certifying that $w\notin\D$.

  At first a prefix in the form
  \begin{equation}\label{var-first}
    \beta(x)(\eps\cup\{0,1\}),\quad |x|=5|Q|,     
  \end{equation}
  guarantees that $L(A)\sm\D\ne \es$ due to Proposition~\ref{short}.

  Note that each language from the following list
  \begin{align}
    &  \beta(x)00, && \\
    &  \beta(x)110^i1, &&i<|x|^2+3,\\
    &  \beta(x)110^i, &&i>|x|^2+3,\\
    &  \beta(x)110^{|x|^2+3}\beta(y)00, &&\\
    &  \beta(x)110^{|x|^2+3}\beta(y), &&|x|<|y|,\\
    &  \beta(x)110^{|x|^2+3}\beta(y)11, &&|x|>|y|,\\
    &  \beta(x)110^{|x|^2+3}\beta(y)110^i1, &&i<|y|^2+3, && |x|=|y|,\\
    &  \beta(x)110^{|x|^2+3}\beta(y)110^i, &&i>|y|^2+3, && |x|=|y|.
    \label{var-last}
  \end{align}
  do not contain neither  words having a prefix from~$D$ nor prefixes of
  words from~$D$. 

  Also, each
  condition~(\ref{var-first}--\ref{var-last}) is recognized by a log  space
  algorithm reading an input in one way provided the length of a word
  is less then $10|Q(A)|+1$.

  With the same restrictions it is possible to verify that the
  language $L(A)$ contains a 
  word $w$ such that $w$ is a prefix of a word from a language
  presented in the list~(\ref{var-first}--\ref{var-last}). More
  exactly, for~\eqref{var-last} we exclude prefixes in the form
  \begin{equation}\label{eq-pref}
    \beta(x)110^{|x|^2+3}\beta(y)110^{|x|^2+3},\quad |x|=|y|.
  \end{equation}

  If  $L(A)$ do not contain any word satisfying the above requirements
  then each word from $L(A)$ has a prefix in the
  form~\eqref{eq-pref}. 
  In this case the algorithm should indicate a prefix with 
  $x\ne y$. If such a prefix does not exist we have $L(A)\subset \D$.

  To indicate a prefix in the form~\eqref{eq-pref} with $x\ne y$ the
  algorithm guesses two states $q,t$ such that $t$ is coaccessible
  (and checks coaccessibility by consulting with \NL{} oracle).

  The algorithm also guesses states 
  \begin{equation}\label{q1q2}
    q_1=\delta_A(s,\beta(p)),\quad
    q_2=\delta_A(q,\beta(p)),
  \end{equation}
  where $p$ is the greatest common prefix of $x$ and $y$. Then it
  simulates reading $\beta(p)$ starting from states $s$ and
  $q$ keeping the number of guessed symbols in a counter. It continues
  by reading $01$ in one case and by reading $10$ in the another (to 
  guarante that $x\ne y$). After that the algorithm guesses two
  words separately verifying for both of them the
  format~\eqref{eq-pref} and counting their lengths. When guessing is
  finished the algorithm checks that the above conditions~\eqref{q1q2}
  and compares the lengths of words.
\end{proof}

Now we present a map~$\dex$ to construct the second reduction
in~\eqref{prefred}: 
\begin{equation}\label{dex-def}
  \begin{aligned}
    \dex\colon \eps&\mapsto\eps,\\
    \dex\colon 0x&\mapsto \ex(x),\\
    \dex\colon 1x&\mapsto \ex(\pi_1(x))\pi_2(x),
  \end{aligned}
\end{equation}
where $\pi(x) = (\pi_1(x),\pi_2(x))$ is a bijection $\{0,1\}^*$ onto
$\{0,1\}^*\times(\{0,1\}^*\setminus\{\eps\})$ such that $\pi$,
$\pi^{-1}$ are  log space computable.

It immediately follows from definition that  $\dex$  is injective and
$\Im(\dex)=\D\cup\{\eps\}$. It easy to see that $\dex$ and $\dex^{-1}$
are log space computable.

Note that a bijection $\pi$  in~\eqref{dex-def} can be easily
constructed from a map
\begin{equation}\label{phi-def}
  \ph\colon (x,\al y) \mapsto \beta(x)\al\al y,
\end{equation}
which sends $\{0,1\}^*\times(\{0,1\}^*\setminus\{\eps\})$ to a set $S$
consisting of words of the length greater than~3 and words $00$, $11$.
Namely,  $\pi(x) =
\ph^{-1}(\iota(x))$, where $\iota$ is an injective map to~$S$ such
that both $\iota$ and $\iota^{-1}$ are log space computable.

For the sake of completeness we presented a  construction of such a map.

\begin{lemma}\label{finite-exception}
  Let $S$ be a cofinite set of binary words, i.e. $|\bar S|<\infty$.
  Then there exists an injective map $f_S$  such that domain of $f_S$
  is the set of all words, $\Im(f_S)=S$ and 
  $f_S$, $f_S^{-1}$ are log space computable.
\end{lemma}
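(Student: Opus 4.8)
The plan is to realize $f_S$ as the order isomorphism between $\{0,1\}^*$ and $S$ induced by the length-lexicographic (``military'') order $\preceq$, which orders binary words first by length and then lexicographically, so that $\eps\prec 0\prec 1\prec 00\prec 01\prec\dots$. In this order the successor map $\mathrm{succ}$ and the predecessor map $\mathrm{pred}$ are log space computable: to pass from $w$ to its successor one finds the rightmost $0$ in $w$, replaces it by $1$ and all subsequent symbols by $0$, handling the exceptional case $w=1^n$ by outputting $0^{n+1}$; the predecessor is symmetric. Only a position counter of size $O(\log|w|)$ and a constant number of control states are needed, so both maps, and iterates of them by a constant number of steps, stay in log space. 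Enumerating $\{0,1\}^*$ and $S$ in increasing $\preceq$-order and matching the $j$th word of one list with the $j$th word of the other gives a bijection $f_S\colon\{0,1\}^*\to S$; the entire task is to show that this bijection, rather than its abstract definition, is log space computable in both directions.

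The key point is that $\bar S$ is finite, say $\bar S=\{b_1,\dots,b_k\}$, so the two enumerations differ only by a bounded shift. Writing $R(w)$ for the $\preceq$-rank of $w$ among all words and $R_S(v)$ for the rank of $v\in S$ among the words of $S$, one has $R(w)=R_S(w)+h(w)$, where $h(w)=|\{i: b_i\prec w\}|\leq k$. Hence $f_S$ never moves a word by more than $k$ positions in $\preceq$-order. This makes the inverse explicit: for $v\in S$,
$$
f_S^{-1}(v)=\mathrm{pred}^{\,d(v)}(v),\qquad d(v)=|\{i: b_i\preceq v\}|,
$$
since removing the $d(v)$ holes lying $\preceq v$ shifts $v$ back by exactly $d(v)$ places. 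The quantity $d(v)\leq k$ is obtained by comparing $v$ with each of the constantly many fixed words $b_i$ (a log space operation), after which $\mathrm{pred}$ is applied a bounded number of times; thus $f_S^{-1}$ is log space computable.

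For the forward direction I would recover $f_S(w)$ by a bounded search rather than by computing ranks explicitly, which would require $|w|$-bit arithmetic and hence linear space. From $v=f_S(w)$ and the formula above one gets $v=\mathrm{succ}^{\,d(v)}(w)$ with $d(v)\leq k$, so the algorithm tries each candidate $v_d=\mathrm{succ}^{\,d}(w)$ for $d=0,1,\dots,k$ and outputs the unique $v_d$ that lies in $S$ and satisfies $|\{i:b_i\preceq v_d\}|=d$. Existence and uniqueness of such a $d$ follow from $f_S$ being a bijection, and each trial is log space. The main thing to verify carefully is precisely this finiteness-driven bound---that both maps are a composition of a constant number of $\mathrm{succ}/\mathrm{pred}$ steps with a log space counting of the finitely many exceptional words---together with the boundary bookkeeping at $\eps$ and at words of the form $1^n$, where lengths change; bijectivity and $\Im(f_S)=S$ are then immediate from the order-isomorphism description.
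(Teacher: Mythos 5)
Your proposal is correct, but it takes a genuinely different route from the paper's. The paper gives a Hilbert-hotel construction: it fixes a set $M$ of words of one common length $n>\max_{x\in\bar S}|x|$ with $|M|=|\bar S|$, a bijection $\vk\colon \bar S\to M$, sends each $x\in\bar S$ to $1\vk(x)$, makes room by shifting the finitely many chains $0^i1m\mapsto 0^{i+1}1m$ for $m\in M$, and leaves every other word fixed; injectivity, $\Im(f_S)=S$ and log space computability of both directions are then nearly immediate, since every modification is a finite table lookup or the prepending/removal of a single zero, and the map is the identity outside finitely many chains. You instead take the canonical order isomorphism from $\{0,1\}^*$ onto $S$ for the length-lexicographic order and exploit cofiniteness quantitatively: every word is displaced by at most $k=|\bar S|$ successor steps, the displacement $d(v)=|\{i: b_i\preceq v\}|$ is obtained by comparing against the constantly many excluded words, and both directions become constant-fold compositions of the log space increment/decrement maps. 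Your bounded search in the forward direction correctly avoids explicit rank arithmetic (which, as you note, would need $|w|$-bit counters), and the uniqueness of the successful $d$ does follow as you claim: any candidate $v_d=\mathrm{succ}^d(w)$ lying in $S$ with $d(v_d)=d$ has $S$-rank equal to the rank of $w$, hence equals $f_S(w)$, while the candidates $v_d$ are pairwise distinct. What each approach buys: the paper's map is almost everywhere the identity, so verification is essentially trivial; yours is canonical (no arbitrary choices of $M$ and $\vk$) and monotone, at the price of the rank bookkeeping and the counter edge cases at $\eps$ and $1^n$, which you flag and handle.
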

\begin{proof}
  Choose a set of binary words $M$ such that all words in $M$ have the
  same length~$n$,  $|M|=|\bar S|$ and
  $$
  n >\max_{x\in\bar S}|x|.
  $$
  Then choose a bijection $\vk\colon \bar S\to M$. Now a map $f_S$ in
  question can be defined as follows
  \begin{equation*}\label{fS-def}
    \begin{aligned}
      f_S\colon x&\mapsto 1\vk(x), && \text{for } x\in \bar S,\\
      f_S\colon 0^i1m&\mapsto 0^{i+1}1m,&& \text{for } q\in M,\\
      f_S\colon x&\mapsto x && \text{otherwise.}
    \end{aligned}  
  \end{equation*}
  It is obvious that $f_S$ satisfies all requirements.
\end{proof}

\begin{proof}[Proof of Theorem~\ref{third}]
  It is observed above that the language $X$ is equivalent to the language
  $X'=0X\cup\{\eps\}$ under 
  $\lelog$-reductions. From definition~\eqref{dex-def} the map $\dex$
  sends nonempty words from $X'$ to the set $D$ while the images of
  all nonempty words coincide with~$\D$. It means that
  $\dex(\overline{X'})$ is suffix closed, i.e. $w\in
  \dex(\overline{X'})$ implies $wv\in \dex(\overline{X'})$ for
  all~$v$, and $X'_\dex =
  \overline{\dex(\overline{X'})}$ is prefix closed.

  Therefore it is sufficient to show that
  $$
  \reg(X'_\dex)\ledtnl X'.
  $$
  The reduction is constructed using Lemmata~\ref{list}
  and~\ref{upD-only} in a way similar to the  proof of
  Theorem~\ref{first}.  

  Using Lemma~\ref{upD-only}
  the reducing algorithm checks $L(A)\subset\D$ for an automaton $A$
  which is an instance of the problem $\reg(X'_\dex)$. If
  $L(A)\not\subset\D$ then the answer for this instance is positive
  and the reducing algorithm outputs a trivial query list containing
  the fixed word from~$X'_\dex$. Otherwise the reducing algorithm
  invokes the procedure from Lemma~\ref{list} to generate the element list of
  $\ex^{-1}(R_D)$.
  It follows from construction of the set $X'_\dex$ that $L(A)\cap
  X'_\dex\ne\es$ iff $x_j\in X'$ for some $j$.
\end{proof}

Taking into account normality of classes in the polynomial hierarchy
we get the following corollary. 

\begin{corollary}
  For each $k$ the classes $\Sigma^p_k$, $\Pi^p_k$ contain RR problems
  with a prefix-closed filter
  that are
  complete for a class under polynomial reductions. 
\end{corollary}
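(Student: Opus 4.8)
The plan is to mirror the derivation of Corollary~\ref{ex-complete} from Theorem~\ref{first}, now feeding in the prefix-closed filter supplied by Theorem~\ref{third}. First I would fix $\C$ to be one of $\Sigma^p_k$ or $\Pi^p_k$ and choose a language $X$ that is $\lelog$-complete for $\C$ (for instance a quantified Boolean formula problem with $k$ alternations). Applying Theorem~\ref{third} to this $X$ produces a prefix-closed filter $P$ with $X \lelog \reg(P)\ledtnl X$, and I claim that $\reg(P)$ is the desired complete RR problem.

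For hardness the left reduction already does the work: since $X$ is $\lelog$-complete for $\C$, every $Y\in\C$ satisfies $Y\lelog X\lelog \reg(P)$, and transitivity of $\lelog$ (hence of $\lep$) yields $Y\lep \reg(P)$. Thus $\reg(P)$ is $\C$-hard under polynomial reductions, with the entire hardness coming for free from the monoreduction built in Theorem~\ref{third}.

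For membership I would upgrade the right reduction $\reg(P)\ledtnl X$ to an ordinary $m$-reduction. By Proposition~\ref{HierNormal} the class $\C$ is normal, and since $\C$ contains $m$-complete languages, Corollary~\ref{normal-classes} shows that $X$ itself is normal. The proposition in Section~\ref{red-sect} relating disjunctive and $m$-reductions then converts $\reg(P)\ledtnl X$ into $\reg(P)\lenl X$. Finally, because $\FNL$ functions are polynomial-time computable, $\lenl$ implies $\lep$, so $\reg(P)\lep X$; as $X\in\C$ and $\C$ is closed under polynomial $m$-reductions, I conclude $\reg(P)\in\C$. Together with hardness this makes $\reg(P)$ complete for $\C$ under polynomial reductions, and $P$ is prefix-closed by construction.

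The one genuinely essential ingredient—and the step a naive attempt is most likely to mishandle—is the passage from the disjunctive reduction $\ledtnl$ back into $\C$. On its own a disjunctive reduction only gives $\reg(P)\ledtp X$, and such reductions need not preserve membership in a complexity class; what rescues the argument is precisely normality of $\C$ (equivalently of the chosen complete language $X$), i.e.\ closure under the map $Y\mapsto \Seq(Y)$, which is exactly what turns a disjunctive reduction into an $m$-reduction. This is why Proposition~\ref{HierNormal} is the crux: without it the inclusion $\reg(P)\in\C$ would not follow.
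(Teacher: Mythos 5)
Your proposal is correct and takes essentially the same route as the paper, which derives this corollary exactly as it derived Corollary~\ref{ex-complete}: apply Theorem~\ref{third} to a complete language $X$, get hardness from the left $\lelog$-reduction, and get membership from normality of the polynomial-hierarchy classes (Proposition~\ref{HierNormal} together with Corollary~\ref{normal-classes} and the proposition upgrading $\ledtnl$ to $\lenl$ when the target is normal, then $\lenl$ to $\lep$). Your closing observation---that normality is precisely what turns the disjunctive reduction back into an $m$-reduction landing inside the class---is exactly the point the paper compresses into the phrase ``taking into account normality of classes in the polynomial hierarchy.''
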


\end{document}